\documentclass[conference,letterpaper]{IEEEtran}
%[conference,a4paper]
\addtolength{\topmargin}{9mm}

\usepackage{amssymb, cite}
\usepackage{bm,bbm}
\usepackage{mathtools}
\usepackage{multirow,color,amsfonts}
\usepackage{tabulary}
\usepackage{subfigure}
\usepackage{graphicx}
\usepackage{setspace}
\usepackage{enumerate}
\usepackage[]{algorithm2e}
\usepackage{comment}

\usepackage[utf8]{inputenc} 
\usepackage[T1]{fontenc}
\usepackage{ifthen}
\usepackage{cite}

\usepackage[hyphens]{url}
\usepackage[hidelinks]{hyperref}
\hypersetup{breaklinks=true}
\urlstyle{same}

%\usepackage[cmex10]{amsmath} % Use the [cmex10] option to ensure complicance
                             % with IEEE Xplore (see bare_conf.tex)

%% Please note that the amsthm package must not be loaded with
%% IEEEtran.cls because IEEEtran provides its own versions of
%% theorems. Also note that IEEEXplore does not accepts submissions
%% with hyperlinks, i.e., hyperref cannot be used.

\usepackage{amsthm}
\usepackage{amsmath}

\newtheorem{defi}{Definition}

\newtheorem{prop}{Proposition}

\newtheorem{cor}{Corollary}

\newtheorem{lem}{Lemma}
\newtheorem{ex}{Example}

%Chromatic entropy related definitions

\newcommand{\CGE}{{graph entropy}}
\newcommand{\FCGE}{{fractional graph entropy}}

\newcommand{\FCN}{{fractional chromatic number}}

\newcommand{\FCE}{{fractional chromatic entropy}}
\newcommand{\CE}{{chromatic entropy}}

\newcommand{\HG}{{H_{G_{X_1}}(X_1\vert X_2)}}
\newcommand{\HGfrac}{{H^{f}_{G_{X_1}}(X_1\vert X_2)}}

%chromatic entropy
\newcommand{\HGchi}{{H^{\chi}_{G_{X_1}}(X_1\vert X_2)}}
\newcommand{\HGchifrac}{{H^{\chi_f}_{G_{X_1}}(X_1\vert X_2)}}

\newcommand{\Graph}{G_{X_1}}
\newcommand{\twoPowerGraph}{G^2_{{\bf X_1}}}

\newcommand{\nPowerGraph}{G^n_{{\bf X_1}}}

\newcommand{\HGchipowertwo}{{H^{\chi}_{G^2_{{\bf X_1}}}({\bf X_1}\vert {\bf X_2})}}
\newcommand{\HGchipowertwofrac}{{H^{\chi_f}_{G^2_{{\bf X_1}}}({\bf X_1}\vert {\bf X_2})}}
\newcommand{\HGchipowern}{{H^{\chi}_{G^n_{{\bf X_1}}}({\bf X_1}\vert {\bf X_2})}}
\newcommand{\HGchipowerk}{{H^{\chi}_{G^k_{{\bf X_1}}}({\bf X_1}\vert {\bf X_2})}}
\newcommand{\HGchipowerkminusone}{{H^{\chi}_{G^{k-1}_{{\bf X_1}}}({\bf X_1}\vert {\bf X_2})}}
\newcommand{\HGchipowernfrac}{{H^{\chi_f}_{G^n_{{\bf X_1}}}({\bf X_1}\vert {\bf X_2})}}
\newcommand{\HGchipowerkfrac}{{H^{\chi_f}_{G^k_{{\bf X_1}}}({\bf X_1}\vert {\bf X_2})}}
\newcommand{\HGchipowerkminusonefrac}{{H^{\chi_f}_{G^{k-1}_{{\bf X_1}}}({\bf X_1}\vert {\bf X_2})}}

\newcommand{\coloring}{c_{{G_{X_1}}}(X_1)}
\newcommand{\coloringx}{c_{{G_{X_1}}}}
\newcommand{\coloringpowern}{c_{{G^n_{{\bf X_1}}}}({\bf X_1})}
\newcommand{\coloringpowernplusone}{c_{{G^{n+1}_{{\bf X_1}}}}({\bf X_1})}
\newcommand{\coloringpowernx}{c_{{G^n_{{\bf X_1}}}}}
\newcommand{\coloringpowertwox}{c_{{G^2_{{\bf X_1}}}}}

\newcommand{\coloringf}{c^f_{{G_{X_1}}}(X_1)}
\newcommand{\coloringxf}{c^f_{{G_{X_1}}}}
\newcommand{\coloringpowernf}{c^f_{{G^n_{{\bf X_1}}}}({\bf X_1})}

\newcommand{\coloringpowernplusonef}{c^f_{{G^{n+1}_{{\bf X_1}}}}({\bf X_1})}
\newcommand{\coloringpowernxf}{c^f_{{G^n_{{\bf X_1}}}}}

% Reduces space around equations and figures 
%\setlength\abovedisplayskip{3pt plus 2pt minus 2pt} 	% Reduce space before equation
%\setlength\belowdisplayskip{3pt plus 2pt minus 2pt}	% Reduce space after equation
%\setlength\textfloatsep{10pt plus 2pt minus 2pt}		% Reduce space between figure caption and text

\interdisplaylinepenalty=2500 % As explained in bare_conf.tex

%%%%%%
% correct bad hyphenation here
\hyphenation{op-tical net-works semi-conduc-tor}

% ------------------------------------------------------------
\begin{document}
\title{%A Refinement on 
Fractional Graph Coloring for \\Functional Compression with Side Information%\vspace{-0.2cm}
} 
%I think this approach is lossless in the limit as b goes to infinity: H(f(X,Y)|Y)= \HGchifrac (the {\FCE})

 %%% Single author, or several authors with same affiliation:
 \author{
   \IEEEauthorblockN{Derya Malak}
   \IEEEauthorblockA{Communication Systems Department,   EURECOM\\
               %Sophia Antipolis, 06904 FRANCE\\%SophiaTech -  
               derya.malak@eurecom.fr}
}

\maketitle

\begin{abstract}
We describe a rational approach to reduce the computational and communication complexities of lossless point-to-point compression for computation with side information. The traditional method relies on building a characteristic graph with vertices representing the source symbols and with edges that assign a source symbol to a collection of independent sets to be distinguished for the exact recovery of the function. %realizing the computation task.  %captures the mappings from the source to the function outputs. %that represents the source mappings onto the function outputs, i.e., 
Our approach uses fractional coloring for a b-fold coloring of characteristic graphs to provide a linear programming relaxation to the traditional coloring method and achieves coding at a fine-grained granularity.
We derive the fundamental lower bound for compression, given by the fractional characteristic graph entropy, through generalizing the notion of K\"orner's graph entropy. We demonstrate the coding gains of fractional coloring over traditional coloring via a computation example. We conjecture that the integrality gap between fractional coloring and traditional coloring approaches the smallest b that attains the fractional chromatic number to losslessly represent the independent sets for a given characteristic graph, up to a linear scaling which is a function of the fractional chromatic number.%in the limit, as the length of the data stream tends to infinity.}
\end{abstract}

%%%%%%%%%%%%%%%%%%%%%%%%%%%%%%%%%%%%%%%%%%%%%%%
\section{Introduction}
\label{background}

We consider the problem of point-to-point compression for computing a function with decoder side information. %, as shown in Fig. \ref{fig:side_info}. 
Traditionally, this problem is referred to as K\"orner's graph coloring problem \cite{korner1973coding}. 
%\ab{consider moving the figure to top of the next column, also consider removing the sentence: `` We generalize this method via a fractional coloring approach and demonstrate the potential rate savings.'' and focus on the intro to the problem.}
This problem stems from source coding (compression), which has been the subject of extensive study in information theory dating back to the seminal work of Shannon \cite{shannon1948mathematical}, and its many extensions. 
%In compression, the goal of this encoding is parsimony, i.e., we wish to encode sources %such that they take the minimal amount of space, or equivalently, 
%such that the representation is shortest (in terms of the number of bits). 

%%%%%%%%%%%%%%%%%%%%%%%%%%%%%%%%%%%%%%%
\subsection{Coding for Compression}
\label{subsection:coding}

In the traditional compression approach, for a point to point compression of a source variable $X_1$ drawn from distribution $P_{X_1}$, the source coding theorem, as demonstrated by Shannon \cite{shannon1948mathematical}, states that in the limit, as the length of a stream of independent and identically-distributed (i.i.d.) random variable data tends to infinity, the best rate of compression of $X_1$ (quantified in the average number of bits per symbol) is the Shannon entropy of the source, $H(X_1)=\mathbb{E}[-\log P_{X_1}(X_1)]$.
%"information", "surprise", or "uncertainty" inherent to the variable's possible outcomes

A natural extension of the source coding theorem for networked settings is the problem of {\em distributed compression}. 
The problem of distributed lossless compression dates back to the seminal work of Slepian and Wolf \cite{slepian1973noiseless}, instantiated by random binning of the typical source sequences \cite{cover1975proof}. For concreteness, consider two random variables $X_1$ and $X_2$, jointly distributed according to $P_{X_1,X_2}$.
Given two sequences ${\bf X}_1^n=(X_{11},X_{12},\dots, X_{1n})$ and ${\bf X}_2^n=(X_{21},X_{22},\dots, X_{2n})$ drawn i.i.d. from $P_{X_1,X_2}$, Slepian-Wolf Theorem gives a theoretical bound for the lossless coding rate of distributed coding \cite{slepian1973noiseless}: To recover a joint source $({\bf X}^n_1,{\bf X}^n_2)$ drawn from $P_{X_1,X_2}$ at a receiver that has access to side information ${\bf X}^n_2$, it is both necessary and sufficient to encode the source ${\bf X}^n_1$ %s  $X_1$ and $X_2$ 
up to the rate $H(X_1|X_2)$  \cite{slepian1973noiseless}. 
Slepian-Wolf problem is a special case of the general distributed function compression problem. %that we consider in Fig.~\ref{fig:side_info}, 
%where $f({\bf X_1}, {\bf X_2}) = ({\bf X_1}, {\bf X_2}).$

%\ab{I think you want to define $(X_1^n, X_2^n)$ as an iid sequence of length $n$ drawn from $P_{X_1, X_2}$}
%\ab{technically you need a sequence of length $n$ and achievability only works as $n \to \infty$} \derya{ok, yes, I know, is it better now?}
%Note that the encoding is done in a truly distributed way, that is, no communication or coordination is necessary between the encoders. The Slepian-Wolf theorem shows that making use of the correlation, at the expense of vanishing error probability for long sequences, allows a  compression rate equal to that with cooperaton. 

Practical schemes for Slepian-Wolf compression have been proposed by several authors, including \cite{PR031, sartipi2008distributed, zhang2018generalized, 1705002}. The generalization of the  distributed compression scheme of Slepian-Wolf to trees and to networks beyond depth one uses random linear coding, as shown by Ho {\em et al.} in \cite{1705002}.
%As long as the total rate of $X_1$ and $X_2$ is larger than their joint entropy $H(X_1,X_2)$ and none of the sources is encoded with a rate smaller than its conditional entropy, distributed coding can achieve arbitrarily small error probability for long sequences. While the pro
Distributed communication has also been considered in \cite{1705002}, and by Ahlswede {\em et al.} \cite{AhlCaiLiYeu2000} and Yeung in \cite{yeung2002first} for multicasting under general network settings via random linear network coding. %Ahlswede {\em et al.} \cite{AhlCaiLiYeu2000} showed that with network coding, as the symbol size approaches infinity, a source can multicast information at a rate approaching the smallest minimum cut between the source and any receiver. 

\begin{comment}
\begin{figure}[t!]
\centering
\includegraphics[width=0.7\columnwidth]{SideInfo.pdf}
%\vspace{-0.2cm}
\caption{Computing problem with side information.}
\label{fig:side_info}
\end{figure} 
\end{comment}

%%%%%%%%%%%%%%%%%%%%%%%%%%%%%%%%%%%%%%%
\subsection{Coding for Functional Compression}
\label{subsection:coding}

%The traditional objective is to represent the data itself, and not a task derived from the data. The latter problem, i.e.,  
Distributed compression of source variables for the purpose of computing a deterministic function across a network, is referred to as {\em distributed functional compression}. 
%The state of the art  currently states  theoretically that function-oriented representation can be far more efficient than generic data-centric compression. 
To that end, since the pioneering work of Slepian-Wolf \cite{slepian1973noiseless}, different techniques have been explored, e.g., computation with {\em decoder side information} and {\em functional distortion criterion} in Wyner-Ziv settings~\cite{WynZiv1976}, compression for multiple descriptions of functions \cite{gamal1982achievable}, special functions such as addition \cite{korner1979encode}, and multiplication with side information \cite{watanabe2013rate}.
% 
%While entropy and conditional entropy are the quantities that govern the fundamental limits of compression when the goal is to recover the data itself, the case of 
Function-oriented recovery, or compression %for computation 
of $f(X_1,X_2)$ %(this is the two-user special case while more general multi-variate functions can be considered), 
is better understood through the lens of characteristic graph-entropy $H_{G_{X_1}}(X_1)$ \cite{korner1973coding}, which quantifies the minimum number of bits required to represent a function of random variables. %The characteristic graph $G_{X_1}$ is constructed according the distribution of the source variable $X_1$, and the specific structure of $f(X_1,X_2)$. 
This notion was initially devised by K\"orner \cite{korner1973coding} for point-to-point compression.

The zero-error side information problem and the rate regions for the functional compression problem have been investigated by Witsenhausen \cite{witsenhausen1976zero}, along with the formal introduction of the characteristic graph of $X_1$, $X_2$ and $f$,  %Associated with the source pair $(X_1, X_2)$ is a characteristic graph $G$. Its vertex set is X and two distinct vertices $x$ and $x'$$ are connected if they are confusable, i.e., if there is a $y$ such that $p(x,y),p(x',y)>0$ and $f(x,y)\neq f(x',y)$. 
by Orlitsky-Roche \cite{OR01} when one source is fully available at the receiver via conditional graph entropy $H_{G_{X_1}}(X_1|X_2)$, and for restricted and unrestricted inputs by Alon-Orlitsky \cite{alon1996source}. The problem of distributed lossless compression has been studied by Doshi {\em et al.} \cite{doshi2006graph}, and also extended to tree %s and general 
networks (Feizi-M{\'e}dard \cite{feizi2014network} and Doshi {\em et al.} \cite{DSME10}) via generalizing  distributed compression (Slepian-Wolf \cite{slepian1973noiseless}) to {\em distributed functional compression}. %The Slepian-Wolf theorem is the natural scenario where the function $f(X_1,X_2)$ is the identity function. %such that the sources $X_1$ and $X_2$ can be asymptotically compressed up to the rate $ H_{G_{X_1}}(X_1|X_2)$ when $X_2$ is available at the receiver. %along with generalizations to other distributed settings:
%The potential savings of functional compression over source compression can be characterized by the gap between the inner bound for compression, as determined by the {\em distributed communication rate region}, and the outer bound determined by the {\em characteristic graph entropy}.

{\bf \em The requirement for structured coding for computing.} %K\"orner-Marton \cite{korner1979encode} have demonstrated the need for structured codes in certain cases, but no general framework is available. Special cases of distributed computing of functions over communication networks include the open problems of {\em distributed functional compression} \cite{OR01} and {\em communicating correlated sources over a multiple access channel} \cite{korner1979encode}, with little progress since their introduction in the 1980s. These instances demonstrate the requirement for structured codes.
For distributed compression of a general function, 
%a code optimized for communication, i.e., 
trimming of (independently encoded) Slepian–Wolf partitions may not be feasible. 
%In existing achievable schemes, encoder finds mappings from data onto its equivalence classes and uses graph coloring-based encoding for these equivalence classes, e.g., \cite{OR01}, \cite{AO96}, \cite{korner1979encode}, and \cite{Korner1973}. We note that graph coloring-based approaches rely on NP-hard schemes, and impose necessary conditions on the functions for the valid colorings which are hard or impossible to satisfy.  However, the general problem of nonlinear function encoding is open. 
In other instances, good computation codes may only achieve marginal gains in computing capacity over separation-based codes (see e.g., \cite{nazer2007computation} and \cite{feizi2014network}) at the expense of a significant computation {\em burden on the encoders and decoders}. 
There also exist approaches to compression of graphical data in sparse scenarios, e.g., \cite{delgosha2020universal,li2019unified,magner2018lossless}, %\cite{delgosha2018distributed,delgosha2020universal,DelAnaISIT2020,Delgosha2019notion,li2019unified,anantharam2010information,choi2012compression,magner2018lossless}, 
which may not capture {\em computations of general functions}. %{\em lack the functional compression and the structural information aspects} in general. 
Hence, %{\em optimal compression} codes may impose a significant computation {\em burden on the encoders and decoders, and %realizing low-complexity computing in networks requires a fresh vision, different from the na\"ive linear principles.
designing efficient function-oriented %low-complexity 
codebooks requires a different vision to alleviate the redundancy of data-oriented encoding.

{\bf \em The need for constructive techniques to compression.}
For existing graph entropy-based approaches, %a thorough discussion 
we refer the reader to  \cite{doshi2006graph,dehmer2011history, alon1996source, OR01, korner1973coding, FES04, AnupRao2010lectures} and the references therein. The proofs of functional compression are by means of coloring characteristics graphs of functions but provide no natural constructive approach %that matches the bounds of the existence results
%formulated as an existence theorem without a constructive method for finding the object whose existence it proves %https://en.wikipedia.org/wiki/Entropy_compression
to instantiate functional compression. %There are not, to our knowledge, design algorithms to exploit those theoretical possibilities. 

%\ab{In contributions you want to give a clear sense of what goes beyond [31]. A sloppy reader may feel that most techniques are borrowed} %\ab{Ok this is great! I think you want to state it as such.}
%%%%%%%%%%%%%%%%%%%%%%%%%%%%%%%%%%%%%%%%%%%%%%%
Our contributions in this paper are summarized as follows:

\begin{itemize}
    \item In Sect. \ref{section:setup}, we provide %a primer on graph coloring-based coding techniques for functional compression. 
    a %linear programming 
    relaxation to the traditional graph coloring approach (NP-hard) for functional compression. This is possible through fractional coloring, by generalizing traditional graph coloring that assigns one color per vertex. %by generalizing K\"orner's graph coloring method that assigns one color per vertex to $b$-fold colorings. %originally devised for the zero-error point-to-point functional compression problem.
    An $a:b$ fractional graph coloring assigns $b$ colors out of a total of $a$ available colors to each vertex of a graph such that adjacent vertices have disjoint colors. %and has a reduced computational complexity over the traditional approach.
    %which is known as a $a:b$ coloring \cite{scheinerman2011fractional}.
    %The analysis for fractional coloring is possible via borrowing tools from \cite{scheinerman2011fractional}. 
    %The $b$-fold coloring problem is expressed as a linear program (i.e., accepts a polynomial-time solution) unlike the original coloring problem which is an integer programming problem (NP-complete).
    
   %Uncomment this in arxiv
    \item In Sect. \ref{section:codes}, we provide a binary code construction for fractional coloring, and determine the cost of encoding.

    \item We introduce the concept of the {\em fractional chromatic entropy} and characterize it in Prop \ref{FCE_characteristic_graph} exploiting %the chain rule for the entropy rate and Han's inequality 
    Han's theorem 
    \cite{te1978nonnegative}, which states that the average entropy decreases monotonically in the size of the subset.   
    In Prop. \ref{FCE_characteristic_graph_n_limit}, we derive the analytical expression for the {\FCGE}, $\HGfrac$, that gives the achievable rate for functional compression with side information. %\ab{consider breaking this into two bullets unless the two Props convey the same message and I failed to see it}
    \item Sect. \ref{section:FCE} presents our main results providing a %extending the existing fundamental 
    rate bound on functional compression using fractional coloring. Exploiting several %interesting 
    properties of {\FCN}, we state that $\HGfrac$ lower bounds $\HG$ (Lemma \ref{FCGEvsCGE}). %\ab{What is the takeaway from Prop. 3 (Lemma 1)? Is it that $\HGfrac$ characterizes the fundamental limit of this problem and $\HG$ does not?} %\derya{according to (\ref{chromatic}), yes to both questions.} 
    This new notion provides a refinement in coloring such that on average less colors are spent and the communication complexity is reduced. 
    \item Sect. \ref{section:coding_gains} provides several bounds on the integrality gap between fractional and traditional colorings. To contrast the potential rate savings of our approach over traditional coloring, %the notion of K\"orner's graph entropy for functional compression
we show that the integrality gap %, which is lower bounded by $1$ (Lemma \ref{FCGEvsCGE}), 
is a monotonically increasing function of the source sequence length $n$ (Lemma \ref{IG_as_function_n}), and approximate the integrality gap, and %conjecture that (Conj. \ref{conj_IG_lower_bound}) 
observe that (Prop. \ref{prop_IG_lower_bound})
it scales linearly with $b$, %in the limit of large $n$, 
given a valid $b$-fold coloring.  
%\ab{perhaps move around and define b first}
Hence, our approach yields {\em a reduced communication complexity} by a factor of $b$ (up to a linear scaling), in the %\footnote{Communication complexity quantifies the minimum 
number of communication or exchanged bits \cite{rao2020communication}, \cite{andrew1979some},  %\cite{abelson1980lower}
%} 
versus the traditional approach. 

\end{itemize}

%%%%%%%%%%%%%%%%%%%%%%%%%%%%%%%%%%%%%%%%%%%%%%%

%Outline: 1. Problem setup
%2.the code construction (given a fractional coloring)
%3. our construction satisfies the setup
%4. rest of the analysis.
%%%%%%%%%%%%%%%%%%%%%%%%%%%%%%%%%%%%%%%
\section{Problem Setup}
\label{section:setup}

We consider the problem of lossless distributed functional compression with side information introduced in \cite{OR01} (via generalizing \cite{WynZiv1976} using a characteristic graph approach). The encoder has source $X_1$. The decoder has source $X_2$, which is not accessible at the encoder side. Given two statistically dependent i.i.d. finite-alphabet random sequences $X_1^n$ and $X_2^n$, our goal is to give a theoretical bound for the lossless coding rate to encode $X_1$ for computing a function $f(X_1,\,X_2)$ to achieve arbitrarily small error probability for long sequences. Both the encoder and the decoder knows the function. 
In \cite{OR01} and its extensions, this problem, for the zero-error setting, has been tackled using a traditional coloring approach and it has been proven that a zero-error compression up to a rate $ H_{G_{X_1}}(X_1|X_2)$ is possible when $X_2$ is available at the receiver.

\begin{figure}[t!]
%\vspace{-0.3cm}
\centering
\includegraphics[width=0.75\columnwidth]{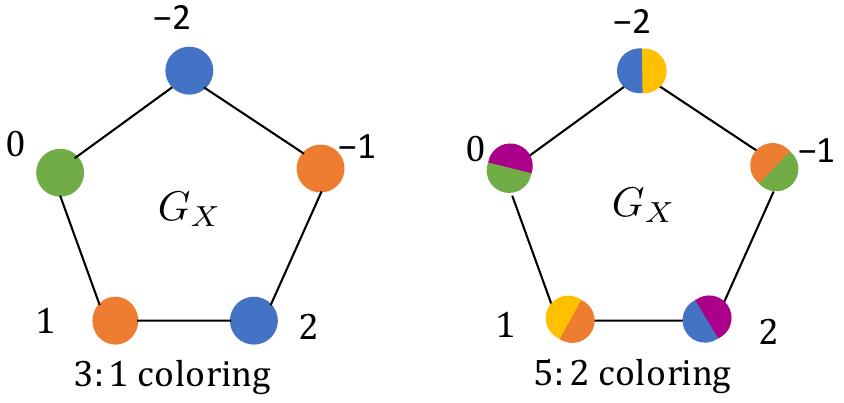}
\caption{(Left) %The characteristic graph 
$\Graph$ for Example \ref{uniform_example} with $|\mathcal{X}_1|=5$, where different colors on connected vertices indicate that those should be distinguished. %(Middle) A fractional $6:2$ coloring. 
(Right) A fractional $5:2$ coloring, which achieves $\chi_f$ \cite{scheinerman2011fractional}. %\ab{I think the definition of a:b coloring should appear before the example.} \derya{it is now under contributions.}
%\ab{do you need the $6:1$ color? I thought the goal here was to show that if you start increasing $b,$ you can find a coloring with an $a$ such that a/b is smaller.}\derya{Here, $\chi_b=2.5$ is optimal. If you try to have $b=3$, then since you cannot reuse each color more than twice, $a>7$. Hence, $a/b=8/3>2.5$. For $b=4$, you need at least $10$ colors.}
}%\ab{I meant do you need 6:2?}\derya{yes}
%\vspace{-0.3cm}
\label{fig:fractional_coloring}
\end{figure} 

%To explain our code construction we next introduce several definitions pertaining to traditional graph coloring (Sect. \ref{section:traditional_coloring}) and fractional graph coloring (Sect. \ref{section:fractional_coloring}).

%%%%%%%%%%%%%%%%%%%%%%%%%%%%%%%%%%%%%%%
\subsection{Traditional Coloring of Characteristic Graphs}% and Entropy of Characteristic Graphs
\label{section:traditional_coloring}

Let $G_{X_1}$ be the characteristic graph the encoder builds for computing the function $f(X_1,\,X_2)$, determined as function of $X_1$, $X_2$, and $f$. The characteristic graph is denoted by $G_{X_1}=(V_{G_{X_1}},\,E_{G_{X_1}})$, where $V_{G_{X_1}}=\mathcal{X}_1$ and an edge $(x_1^1, x_1^2)\in E_{G_{X_1}}$ if and only if there exists a $x_2^1 \in \mathcal{X}_2$ such that $p(x^1_1, x^1_2)\cdot p(x^2_1, x^1_2) > 0 $ and $f(x^1_1, x^1_2)\neq f(x^2_1, x^1_2)$. We assign different codes (colors) to connected vertices, which corresponds a graph coloring. Vertices that are not connected to each other can be assigned to the same or different colors. In this paper, we only consider vertex colorings. %Not all possible colorings of $\Graph$ are valid. 
A valid coloring of a graph $\Graph$ is such that each vertex of $\Graph$ is assigned a color such that adjacent vertices receive disjoint colors.

To better motivate our approach and demonstrate the achievable description lengths, we illustrate the relevance of characteristic graph in compression via the following example.

%Sample complexity bounds for dictionary learning from vector-and tensor-valued data \cite{shakeri2019sample}, 
\begin{ex}\label{uniform_example}{\bf A characteristic graph and its entropy.} %\cite{feizi2014network}
Random variables $X_1$ and $X_2$ are over the %same 
alphabet $\mathcal{X}=\{-2,\,-1,\,0,\,1,\,2\}$. The joint distribution %$P_{X_1,\,X_2}$ 
has ordered entries:
\begin{align}
\label{JointProbabilityTable}
P_{X_1,\,X_2}=\begin{bmatrix}
0.1 & 0.1 & 0 & 0 & 0\\
0.1 & 0 & 0 & 0 & 0.1\\
0 & 0.1 & 0.1 & 0 & 0\\
0 & 0 & 0.1 & 0.1 & 0\\
0 & 0 & 0 & 0.1 & 0.1
\end{bmatrix},   
\end{align}
where $X_1$ is uniformly distributed, and $f(X_1,X_2)=X_1+X_2$ such that $\Graph$ denotes the characteristic graph the encoder builds, where $V_{G_{X_1}}=\mathcal{X}$, and $E_{G_{X_1}}=\{(-2,-1),(-2,0),(0,1),(1,2),(2,-1)\}$. %, and $\{(-2,2),(-1,1),(0,2)\}\notin E_{G_{X_1}}$ for any given $x_2\in\mathcal{X}$. 
A valid coloring of $G_{X_1}$, denoted by $\coloring$ and shown in Fig. \ref{fig:fractional_coloring} (Left), %\ab{I am lost at the description of the example. What is the joint distribution of $(X_1, X_2)$?}
%\derya{we do not need it. The graph is based on a given $X_2=x_2$. It is a conditional coloring given $X_2$ and its entropy $H_{G_{X_1}}(X_1|X_2)$.} %\ab{are you assuming that regardless of $x_2,$ the coloring is the same? I think that puts a strong constraint on the joint distribution, which at least needs to be explicitly called out.} %\derya{In building $\Graph$ if we need to distinguish 2 vertices for any given value of $X_2$, then there is an edge. (This is redundant but how it is being implemented). However, it is not regardless of $x_2$. If $x_2$ is given, $f(X_1,x_2)$ is a subset of all possible outcomes. Then we need to decide the independent sets accordingly.} %\ab{I think a concrete example would help.} 
has a distribution $P(c_1)=P(c_2)=0.4$ and $P(c_3)=0.2$ over $\{c_1,\,c_2,\,c_3\}$. This yields an entropy %\footnote{If $P_{X_1,X_2}\neq P_{X_1}\cdot P_{X_2}$, then the coloring $\coloringx$ is a function of $X_2$ and its entropy can be determined as $H(\coloring \vert X_2)=\sum\nolimits_{x_2\in\mathcal{X}_2}P_{X_2}(x_2)\cdot H(\coloring \vert x_2)$ \cite{korner1973coding}. We later detail this dependence in (\ref{chromatic_vs_characteristic}).} 
$H(\coloring)\approx 1.52$.

Next, we encode a source sequence with length two, %random variable
${\bf X}_1^2=(X_{11},X_{12})$% \ab{is this a sequence of length 2? If so, please say that this is the notational choice}
, which can take $25$ values $\{(-2,-2),\,(-2,-1),\,(-2,0),\hdots, (2,2)\}$. %\ab{are these tuples? If so, what happened to $(-2, -2)$?} \derya{good catch. I fixed them now. Look it helps having a 2nd eye!} 
To construct the characteristic graph for ${\bf X}_1^2$, i.e., the second power graph $\twoPowerGraph$, we connect two vertices if at least one of coordinates are connected in $\Graph$. %$\twoPowerGraph$, the second power graph of $\Graph$, is illustrated in Fig. \ref{chargraph_and_powergraph} (Right). 
It is possible to color $\twoPowerGraph$  using $8$ colors. The entropy of this coloring satisfies $\frac{1}{2}H(\coloringpowertwox)\approx 1.48 < H(\coloringx)\approx 1.52<H(X_1)\approx 2.32$.
\end{ex}

\begin{figure}[t!]
%\vspace{-0.3cm}
\centering
\includegraphics[width=0.75\columnwidth]{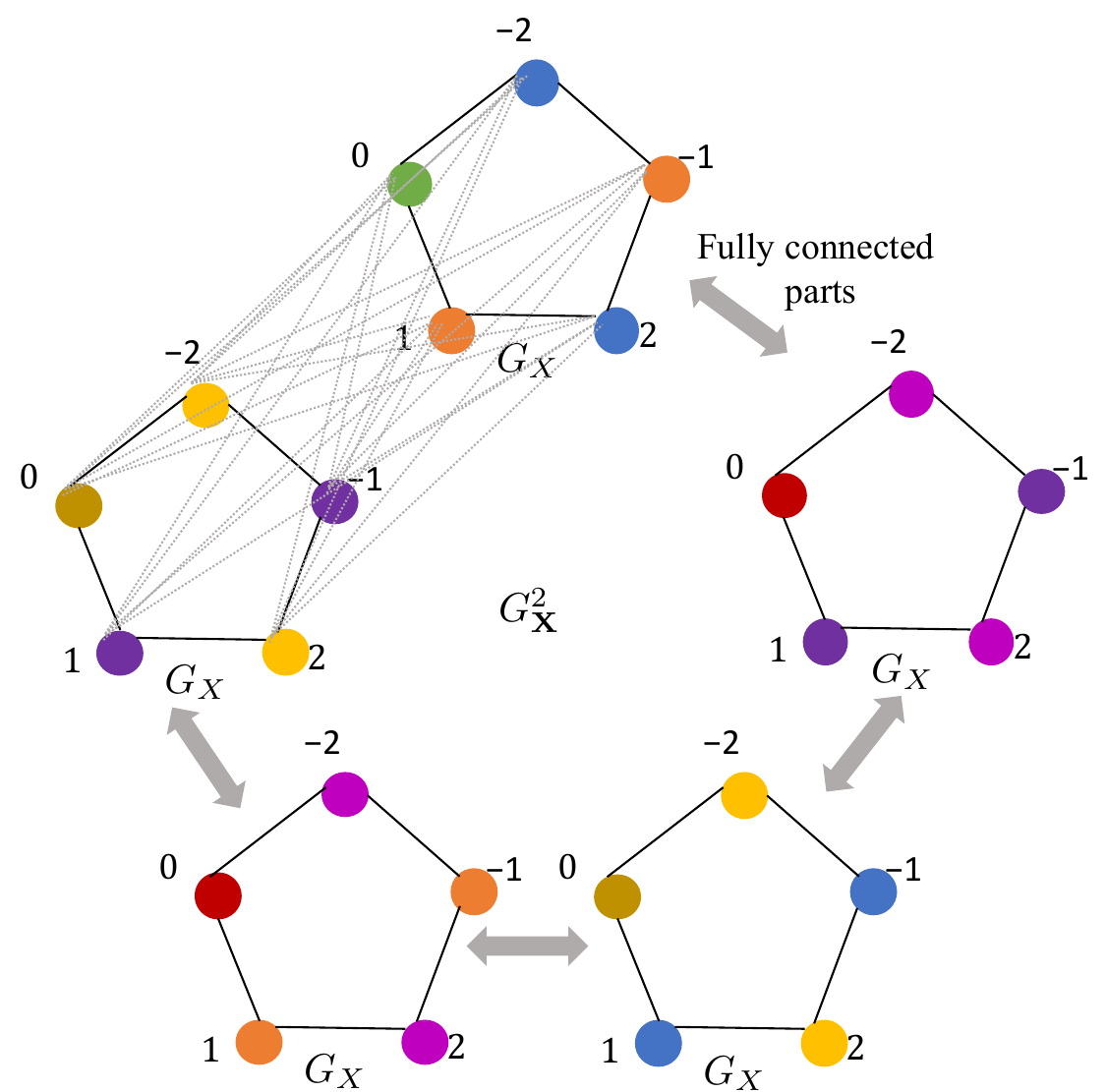}
\caption{The second power graph $\twoPowerGraph$ for Example \ref{uniform_example}, where $a=8$ is the minimum number of colors for which an $a:1$ coloring is possible. %\ab{is $a=8$ the smallest $a$ for which an $a:1$ coloring is possible?}
}
%\vspace{-0.3cm}
\label{fig:2nd_Power_Graph}
\end{figure} 

The chromatic number $\chi(\Graph)$ of a graph $\Graph$ is the minimum number of colours needed to colour the vertices in such a way that no two adjacent vertices have the same colour.

%\ab{consider making this a formal definition}
\begin{defi}\label{chromatic_entropy}
(Chromatic entropy \cite{alon1996source}.)
The {\CE} of a graph $\Graph$ is defined as 
\begin{align}
\label{chromatic}
\HGchi= \min_{\coloringx} \mathcal{H}^{\chi}(\coloring) \ , 
\end{align} 
where $\mathcal{H}^{\chi}(\coloring)=\{H(\coloring):\, \coloring \mbox{ is a valid coloring of } G_{X_1} \vert\, X_2\}$ is the set of chromatic entropies over the set of %all 
valid colorings of $\Graph$. %given $X_2$. 
\end{defi}

%Definition 12 \cite{feizi2014network}. A vertex coloring of a graph is a function $\coloring : V_{x_1} \to \mathbb{N} $ of a graph $\Graph =(V_{X_1} , E_{X_1} )$$ such that $(x_1^1, x_1^2) \in E_{X_1}$ implies $\coloringx(x_1^1)\neq \coloringx(x_1^2)$.

%Definition 14. 
Let $\nPowerGraph=(V_{X_1}^n , E_{X_1}^n )$ be the n-th power of a graph $\Graph$ such that $V_{X_1}^n = \mathcal{X}_1^n$ and $({\bf x}_1^1,{\bf x}_1^2) \in E_{X_1}^n$, where ${\bf x}_1^1=(x_{11}^1,x_{12}^1,\dots,x_{1n}^1)$ and similarly for ${\bf x}_1^2$, when there exists at least one coordinate $i\in\{1,2,\dots,n\}$ such that $(x_{1i}^1 ,x_{1i}^2 ) \in E_{X_1}$. We denote a coloring of $\nPowerGraph$ by $\coloringpowern$. 
K\"orner showed in \cite{korner1973coding} that, in the limit of large $n$, the {\CE} and the {\CGE} are related as
\begin{align}
\label{chromatic_vs_characteristic}
\HG%&=\lim\limits_{n\to \infty}\frac{1}{n}\HGchipowern\nonumber\\
&=\lim\limits_{n\to \infty}\frac{1}{n} \min\limits_{\coloringpowernx} H(\coloringpowern\vert {\bf X}_2) \ . 
\end{align}
The entropy of %the coloring of $\nPowerGraph$ 
$\coloringpowern$ characterizes %specifies 
the minimal %representation or 
description length needed to reconstruct with fidelity $f(X_1,X_2)$ \cite{korner1973coding}. The %degenerate 
case of the identity function yields %corresponds to having 
a complete %characteristic 
graph. %We focus on the lossless reconstruction of $f(X_1,\,X_2)$, as detailed in Sect. \ref{section:FCE}.

%\vspace{-0.1cm}
%%%%%%%%%%%%%%%%%%%%%%%%%%%%%%%%%%%%%%%
\subsection{Fractional Coloring of Characteristic Graphs}
\label{section:fractional_coloring}

%In traditional graph coloring, each vertex is assigned a color and the vertices connected by edges, i.e., the adjacent vertices, have different colors. 
Fractional graph coloring is a natural extension of traditional coloring, where each vertex is assigned a set of colors and the adjacent vertices have disjoint sets. 
Traditional graph coloring problems may not be amenable to a linear programming approach. 
Solving (\ref{chromatic_vs_characteristic}) is equivalent to determining a coloring random variable which minimizes the entropy. However, finding the minimum entropy coloring of $\Graph$ is an NP-hard problem \cite{cardinal2008tight}. 
To solve the coloring problem losslessly in polynomial time, we exploit the fractional coloring relaxation.

%We next detail some definitions and several interesting properties for fractional coloring of graphs. 

\begin{figure}[t!]
%\vspace{-0.3cm}
\centering
\includegraphics[width=0.75\columnwidth]{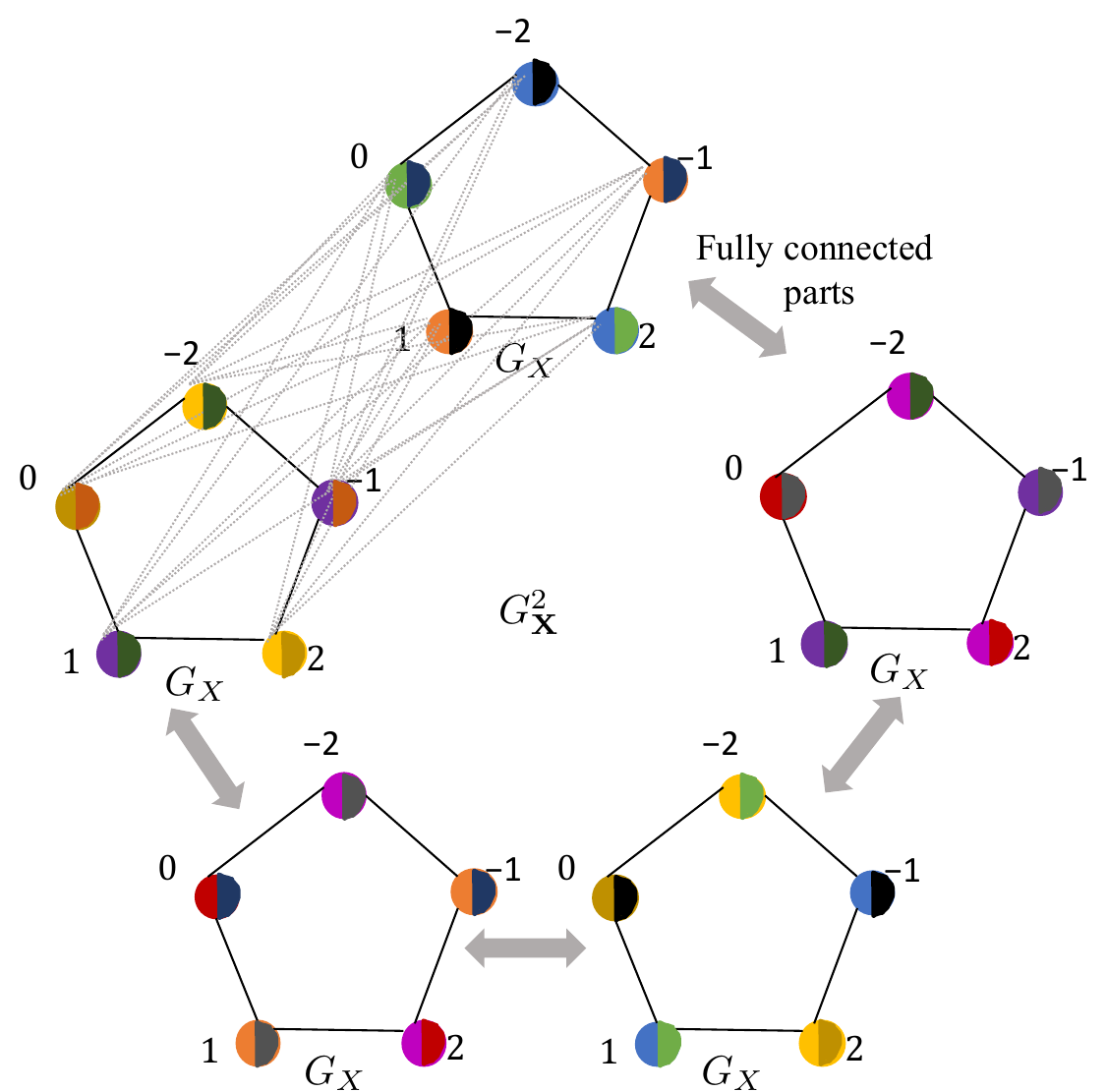}
\caption{A fractional coloring of  $\twoPowerGraph$ for Example \ref{uniform_example}, where $a=13$ is the minimum number of colors for which an $a:2$ coloring exists. %\ab{is $a=13$ the smallest $a$ for which an $a:2$ coloring is possible? If so, does that mean that you can turn this into a more efficient code? Can you show what the code looks like?}%\derya{Yes, it is, with $b=2$. Yes, I show the savings in the next column. You mean the code as a concatenation of half-colors? } 
%\ab{In the end of the day, as a source coding person I don't care about colors; they are tools. I care about the actual code in 0s and 1s. I think you'd want to make those connections crystal clear} 
} 
%\vspace{-0.3cm}
\label{fig:2nd_Power_Graph_Fractional}
\end{figure}

\begin{defi}[Scheinerman and Ullman~\cite{scheinerman2011fractional}]
A valid $b$-fold coloring is an assignment of sets of size $b$ to vertices such that adjacent vertices receive disjoint sets of colors. A valid $a:b$ coloring is a valid $b$-fold coloring out of $a$ available colors. The notation $\chi_b(G)$ represents the $b$-fold chromatic number of graph $G$ that is the least $a$ such that an $a:b$ coloring exists. %\cite{scheinerman2011fractional}%\ab{consider putting the reference at top}\derya{good idea}
\end{defi}

The chromatic number $\chi(G)$ is subadditive, i.e., $\chi _{a+b}(G)\leq \chi _{a}(G)+\chi _{b}(G)$. If $g: \mathbb{Z}^+ \to \mathbb{R}$ is subadditive and $g(b) \geq 0$ for all $b$, then from the sub-additivity lemma \cite{scheinerman2011fractional}, the limit $\lim\limits_{b\to\infty} \frac{g(b)}{b}$ exists and is equal to the infimum of $\frac{g(b)}{b}\,\, (b \in \mathbb{Z}^+)$.

\begin{defi}
The {\FCN} is defined as
\begin{align}
\label{fractional_graph_coloring}
\chi_f(G):=\liminf\limits_{b\to\infty}\left\{\frac{\chi_b(G)}{b}\right\}=\inf\limits_{b} \frac{\chi_b(G)}{b}\ , 
\end{align}
where the existence of this limit follows from the sub-additivity of $b$-fold colorings, and the sub-additivity lemma \cite{scheinerman2011fractional}. 
\end{defi}
%\vspace{-0.2cm}
%Fractional coloring is a relaxation of traditional coloring. 
From a probabilistic perspective, $\chi_f(G)$ represents the smallest $k$ for which there is a %probability 
distribution over the independent sets (an independent set is a set of vertices in a graph, no two of which are adjacent) of $G$ such that for each vertex $v$, given an independent set $I$ drawn from the distribution, $\mathbb{P}(v\in I)\geq\frac{1}{k}$.
Let $\mathcal{I}(G)$ be the set of all independent sets of $G$, and $\mathcal{I}(G,x)$ be the set of all those independent sets which include vertex $x$, and $x_I\in\mathbb{R}^+$ %be a nonnegative real variable 
for each independent set $I$. Then, the {\FCN} $\chi _{f}(G)$ can be obtained as a solution of the following linear program \cite{scheinerman2011fractional}:  
%\begin{comment}
\begin{align}
\label{chi_f_optimal}
\chi _{f}(G) \!=\! \min\limits_{\forall x} \Big\{ \sum _{I\in {\mathcal {I}}(G)}\!x_{I} : \sum_{I\in\mathcal{I}(G,x)} \!x_I \geq 1 , \,\, x_I\geq 0 \Big\}\ .    
\end{align}
%\end{comment}
%
\begin{comment}
\begin{equation}
\label{chi_f_optimal}
\begin{aligned}
\chi _{f}(G)= & \min \limits_{x_I}
&  \sum _{I\in {\mathcal {I}}(G)}x_{I} & \\
& \hspace{0.3cm}\text{s.t.}
& \sum_{I\in\mathcal{I}(G,x)} x_I &\geq 1\\
& &   x_I&\geq 0,\,\,\forall x.\\     
\end{aligned}
\end{equation}
\end{comment}
%which can be solved in the current matrix multiplication time \cite{cohen2021solving}. 
This relaxation transforms traditional coloring, which is an integer programming problem (NP-complete), into a fractional coloring problem. %(linear programming problem). %the solution to the relaxed linear program can be used to gain information about the solution to the original integer program.  
%
%
%In  the example of Fig. \ref{fig:fractional_coloring}, it is easy to note that $x_I=0.5$ for all $n(G)=5$ vertices because $|\mathcal{I}(G,x)|=2$ for all $x$.
We illustrate fractional coloring for Example \ref{uniform_example} in Fig. \ref{fig:fractional_coloring} (right). %\derya{A 6:2 coloring (middle) follows from the $3:1$ coloring.} %(left). 
A 5:2 coloring achieves the optimal solution of (\ref{chi_f_optimal}), where $|{\mathcal {I}}(G)|=10$, %with 10 independent sets in total, 
with 5 of those sets have cardinality 2, and 5 sets have size 1, and $|{\mathcal {I}}(G,x)|=3$ such that $x_I=0.5$ for sets of size 2 and $x_I=0$ for the sets with size 1. Hence, $\chi_b(\Graph)=5$ and from (\ref{fractional_graph_coloring}) $\chi_f(\Graph)=2.5$.

Example \ref{uniform_example} indicates that assigning colors to sufficiently large power graphs, we can compress $X_1$ more. A $5:2$ coloring yields $\HGchifrac=1.16$, providing a saving of $0.36$ bits over $\HGchi=1.52$. 
We sketch the traditional coloring for $\twoPowerGraph$ in Fig. \ref{fig:2nd_Power_Graph}. The distribution of colors satisfies $P(c_k)=0.16$, $k\in\{1,\dots,5\}$ and $P(c_k)=0.08$, $k\in\{6,7\}$ and $P(c_8)=0.04$. Hence, $\HGchipowertwo=1.44$. 
For a $13:2$ coloring (Fig. \ref{fig:2nd_Power_Graph_Fractional}), the colors satisfy $P(c_k)=0.08$, $k\in\{1,\dots,12\}$ and $P(c_{13})=0.04$. Hence, $\HGchipowertwofrac=0.92$, i.e., $0.52$ bits of savings from $\HGchipowertwo$. 

%Comment this paragraph in arxiv
%Due to space constraints, we next focus on the theoretical notions and the achievable gains pertaining to fractional coloring. We explain an example code construction in \cite{malak2022fractional} to demonstrate the practical savings of the proposed model.

%\begin{comment}
%%%%%
\section{Code Construction for Fractional Coloring}
\label{section:codes}

We devise binary codes for computing $f(X_1,X_2)=X_1+X_2$ in Example \ref{uniform_example} using the colorings in Figures \ref{fig:2nd_Power_Graph}-\ref{fig:2nd_Power_Graph_Fractional} to demonstrate the performance of fractional coloring. For each scenario, we consider decodings for specific  encoder realizations for $n=2$ rather than listing all $25$ possible pairs.%source combinations.

%%%
\paragraph{Traditional coloring} $n=2,\,b=1$.

{\bf Encoder.} As shown in Fig. \ref{fig:2nd_Power_Graph}, $a=8$, and the color distribution is $P(c_k)=0.16$, $k\in\{1,\dots,5\}$, $P(c_k)=0.08$, $k\in\{6,7\}$, and $P(c_8)=0.04$. For the given $\{P(c_k)\}_{k=1}^8$, the encoder devises binary %prefix free 
codewords as $Codes(2,1)=(00,011,100,101,110,111,0100,0101)$, and Kraft's inequality holds with equality. The average code length %of this scheme 
is $2.96$ bits.

{\bf Decoder.} The decoder is given the {\em Color-Function mappings}. With 2 received codes, the decoder can recover 2 function outcomes. If decoder gets the codes $00$ and $101$, it maps $00$ to color $Blue$ (which models the source subset $\{-2,2\}$ mapped to a unique function outcome) and maps $101$ to $Magenta$ (modeling $\{-2,2\}$). 
For the given joint distribution in (\ref{JointProbabilityTable}) and for computing the function  $f(X_{1i},X_{2i})=X_{1i}+X_{2i}$ for $i=1,2$, assume that the side information variable satisfies $X_{21}=-1$ and $X_{22}=1$. Then, the colors $Blue$ and $Magenta$ identify the outcomes $-2-1=-3$ and $2+1=3$, respectively. 

This encoding needs $1.48$ bits per function outcome.

%%%%%
\paragraph{Fractional coloring} $n=2,\,b=2$.

{\bf Encoder.} As shown in Fig. \ref{fig:2nd_Power_Graph_Fractional}, $a=13$. The bi-colors ($19$ unique pairs) yield 
%for the unique color pairs 
$P(c_{kl})=0.08$, $kl\in\{1,\dots,6\}$, $P(c_{kl})=0.04$, $kl\in\{7,\dots,19\}$. Given $\{P(c_{kl})\}_{kl=1}^{19}$, %the encoder devises binary length prefix free codewords are 
$Codes(2,2)=(0001,0010,0011,0101,0110,0111,1000,1001,1010,1011,\\1101,1110,1111,00001,00000,01000,01001,11000,11001)$, where Kraft's inequality holds with equality. The average code length of this scheme is $4.24$ bits.

{\bf Decoder.} The decoder needs to be told it  decodes $b=2$ colors from $n=2$ transmissions and given the ColorPair-Function mappings. Hence, the receiver has a finer-grained granularity information compared to traditional coloring. If decoder gets the codes $0001$ and $0111$, it maps $0001$ to the bi-color $BlueGreen$ (where $Blue$ models the source subset $\{-2,2\}$ mapped to a unique function outcome and $Green$ models the source subset $\{0,2\}$. Both are valid independent sets and each element in the independent set can yield the same function outcome depending on the value of $X_2$) and maps $0111$ to the bi-color $MagentaDarkgreen$ (where $Magenta$ models source subset $\{-2,2\}$ mapped to a unique function outcome and $Darkgreen$ models $\{1,-2\}$). 

With $2$ received codes and given $X_2$, the decoder identifies $2$ bi-colors, i.e., $BlueGreen$ and $MagentaDarkgreen$. For the given joint distribution in (\ref{JointProbabilityTable}) and for computing %the function 
$f(X_{1i},\,X_{2i})=X_{1i}+X_{2i}$ for $i=1,2$, assume $X_{21}=-1$ and $X_{22}=1$. Then, the bi-color $BlueGreen$ specifies the outcome pair $(-2-1,\,0-1)=(-3,\,-1)$, and the bi-color $MagentaDarkgreen$ specifies the %outcome 
pair $(2+1,\,1+1)=(3,\,2)$. Hence, the decoder can recover $4$ function outcomes. %We note that the 
Fractional coloring %argument 
requires $n\geq 2$ as it uses %the two realizations 
$X_{21}$ and $X_{22}$, which does not permit the encoder to send $2$ codewords for $n=1$.

This encoding needs $1.06$ bits per function outcome.
%\end{comment}

%%%%%
%We are now ready to present the main results of the paper. 

%I guess my approach requires a storage space due to the replicas vs the traditional coloring?

%We next detail several useful properties of traditional coloring and how it can be generalized to compute the entropy of fractional coloring. 

%%%%%%
 
%%%%%%%%%%
\section{Fractional Chromatic Entropy}
\label{section:FCE}
%The fractional graph coloring provides a natural generalization of the traditional graph coloring. %\ab{this is still not entropy}
We next formalize the notion of fractional chromatic entropy of a set of valid fractional colorings via extending Defn. \ref{chromatic_entropy}. 
\begin{defi}\label{fractional_chromatic_number}
(Fractional chromatic entropy.) 
$\coloringf$ is a valid $a:b$ fractional coloring of $\Graph$ if it assigns $b$ colors  out of a total of $a$ available colors to each $V_{X_1}$ %vertex of $\Graph$ 
such that adjacent vertices have disjoint colors.
We define $\mathcal{H}^{\chi_f}(\coloringf)=\{H(\coloringf):\, \coloringf \mbox{ is a valid a:b coloring of } G_{X_1} \}$ to be the collection of fractional chromatic entropies over the set of all valid $a:b$ colorings of $G_{X_1}$ given $X_2$.
\end{defi}
%\ab{I think you want to formally define fractional choromatic entropy first in a definition and claim it as yours. Then you'd want to say how to compute it.}\derya{i did now.}

%We next characterize the {\FCE}. %for a characteristic graph $\Graph$. 
\begin{prop}\label{FCE_characteristic_graph}
The {\FCE} of a characteristic graph $G_{X_1}$, denoted by $\HGchifrac$, is given as % by the following relation: 
\begin{align}
\label{chromatic_fractional}
\HGchifrac=\inf\limits_{b} \frac{1}{b} \min_{\coloringxf} \mathcal{H}^{\chi_f}(\coloringf)\ . 
\end{align} 
%where $\coloringf$ is a fractional coloring variable. %that assigns $b$ colors to each vertex of $\Graph$ out of $a$ available colors.
\end{prop}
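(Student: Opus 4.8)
The plan is to verify that the right-hand side of (\ref{chromatic_fractional}) is a well-defined quantity and that it is the correct generalization of the chromatic entropy of Defn.~\ref{chromatic_entropy}. First I would abbreviate $g(b):=\min_{\coloringxf}\mathcal{H}^{\chi_f}(\coloringf)$ for the least conditional entropy $H(\coloringf\mid X_2)$ attained by a valid $a{:}b$ coloring of $\Graph$ given $X_2$. Since $\mathcal{X}_1$ is finite there are finitely many colorings, so the minimum is attained and $g(b)\in[0,\infty)$. At $b=1$ a valid $a{:}1$ coloring is exactly a proper coloring, whence $g(1)=\HGchi$ and the $b=1$ term of (\ref{chromatic_fractional}) reproduces Defn.~\ref{chromatic_entropy}; this already certifies $\HGchifrac\le\HGchi$, consistent with the rate savings observed in Example~\ref{uniform_example}.

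The second step is to show that $g$ is subadditive, i.e. $g(b_1+b_2)\le g(b_1)+g(b_2)$. Given an optimal $b_1$-fold coloring $c^{(1)}$ and an optimal $b_2$-fold coloring $c^{(2)}$ drawn from disjoint palettes, assigning each vertex the union of the two color sets yields a valid $(b_1+b_2)$-fold coloring whose coloring variable is the pair $(c^{(1)},c^{(2)})$. Conditioning on $X_2$ and using subadditivity of joint entropy gives $H(c^{(1)},c^{(2)}\mid X_2)\le H(c^{(1)}\mid X_2)+H(c^{(2)}\mid X_2)$, hence $g(b_1+b_2)\le g(b_1)+g(b_2)$. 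Because $g$ is nonnegative and subadditive, the sub-additivity lemma \cite{scheinerman2011fractional} guarantees that $\lim_{b\to\infty} g(b)/b$ exists and equals $\inf_b g(b)/b$, which is precisely the infimum appearing in (\ref{chromatic_fractional}); this makes $\HGchifrac$ well defined and exhibits it as a limiting per-color conditional entropy, in exact parallel with the definition of $\chi_f$ in (\ref{fractional_graph_coloring}).

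The third step invokes Han's theorem \cite{te1978nonnegative} to capture the monotone improvement that the normalization by $b$ is meant to encode. I would decompose a valid $b$-fold coloring into $b$ single-color layers $C_1,\dots,C_b$; each $C_i$ is itself a proper coloring, since for adjacent $u,v$ the assigned sets are disjoint and therefore $C_i(u)\neq C_i(v)$, and ordering the layers canonically makes $(C_1,\dots,C_b)$ a bijective relabelling of the $b$-set, so $H(\coloringf\mid X_2)=H(C_1,\dots,C_b\mid X_2)$. Every size-$k$ sub-collection $\{C_i:i\in S\}$ is again a valid $k$-fold coloring, so $H(C_S\mid X_2)\ge g(k)$. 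Applying Han's inequality conditioned on $X_2$ to $C_1,\dots,C_b$ — namely that the average normalized subset entropy $\frac{1}{k}\cdot\frac{1}{\binom{b}{k}}\sum_{|S|=k}H(C_S\mid X_2)$ is non-increasing in $k$ — formalizes the statement that the full coloring spends fewer colors per symbol on average than its sub-colorings, so that enlarging $b$ never worsens the per-color conditional entropy. Together with the subadditivity argument this certifies that $\inf_b g(b)/b$ is approached monotonically from above, and hence that (\ref{chromatic_fractional}) characterizes $\HGchifrac$.

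The step I expect to be the main obstacle is the interface between Han's theorem and the minimization. Han's monotonicity applies to one fixed joint law $(C_1,\dots,C_b\mid X_2)$, whereas $g(b)$ is itself a minimum over colorings, and the size-$k$ sub-collections of an \emph{optimal} $b$-fold coloring need not be \emph{optimal} $k$-fold colorings. Care is therefore needed to chain the averaged Han inequality with the bound $H(C_S\mid X_2)\ge g(k)$ in the correct direction, to confirm that the canonical ordering of layers and the conditioning on $X_2$ leave every entropy used unchanged, and to keep the unordered-set versus ordered-tuple bookkeeping consistent; this reconciliation is the delicate part of the argument, with the well-definedness of the infimum itself being secured cleanly by subadditivity.
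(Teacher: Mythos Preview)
Your approach is correct and closely mirrors the paper's, with one structural addition. The paper argues via a ``replica graph'' interpretation: it views a valid $a{:}b$ coloring of $\Graph$ as a joint coloring of $b$ copies $G_{X_1(S)}=\{G_{X_{1i}}:i\in S\}$, writes the key identity $H(c_{G_{X_1(S)}}(X_1(S))\mid{\bf X}_2)=H(\coloringf\mid{\bf X}_2)$, and then invokes Han's theorem \cite{te1978nonnegative} on the subset averages to obtain the monotone decrease in $b$, finally tying the infimum to the limit definition of $\chi_f$ in (\ref{fractional_graph_coloring}). Your layer decomposition $C_1,\dots,C_b$ is exactly the inverse of this replica correspondence, so the Han step is the same in substance. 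What you add beyond the paper is the explicit subadditivity argument for $g(b)=\min_{\coloringxf}\mathcal{H}^{\chi_f}(\coloringf)$ via disjoint-palette concatenation, which cleanly secures the existence of $\inf_b g(b)/b=\lim_b g(b)/b$ through the sub-additivity lemma; the paper invokes this lemma only for $\chi_b$, not for the entropy functional itself, and instead leans on Han plus (\ref{fractional_graph_coloring}) to reach the infimum. Your self-identified obstacle---that Han applies to a fixed joint law while $g$ is a minimum---is real but, as you note, the well-definedness of (\ref{chromatic_fractional}) is already guaranteed by subadditivity alone, so the Han step serves only to justify the monotone approach from above, which is precisely how the paper uses it too.
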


\begin{proof}
%To prove this result, we use the generalized chain rule for entropy which provides a bound on the entropy rate of a collection of random variables in terms of the entropy of each variable \cite[Ch. 16.5]{cover2012elements}. %More specifically, 
Let $(Z_1,Z_2,\dots, Z_n)$ be a collection of random variables. For every $S\subseteq \{1,2,\dots,n\}$,
%why do I not use the notation $[1:n]$ here
denote by $Z(S)$ the subset $\{Z_i:\,i\in S\}$. From \cite[Ch. 16.5]{cover2012elements} the average entropy in bits per symbol of a randomly drawn $b$-element subset $Z(S)$ %of $\{Z_1,Z_2,\dots, Z_n\}$ 
is 
\begin{align}
\frac{1}{{n\choose b}} \sum\limits_{S:|S|=b} \frac{H(Z(S))}{b}\ .    
\end{align}

%https://www.ncbi.nlm.nih.gov/pmc/articles/PMC5445619/
Let $G_{X_1(S)}=\{G_{X_{1i}}:i\in S\}$ be an $b=|S|$-tuple of graphs where each $G_{X_{1i}}$ is a replica of $\Graph$. We jointly color $G_{X_1(S)}$ such that $c_{{G_{X_1(S)}}}(X_1(S))=\{c_{{G_{X_{1i}}}}(X_{1i}):i\in S\}$. Using $a$ colors in total and i.i.d. valid colorings across disjoint %sets 
$S$, the %average 
entropy %in bits per symbol 
of a randomly drawn $b$-element subset of colorings %out of $a$ available colors 
is
\begin{align}
\label{entropy_subset_full_coloring_vs_fractional_coloring}
    H(c_{{G_{X_1(S)}}}(X_1(S))\vert {\bf X}_2) = H(c^f_{{G_{X_1}}}(X_1)\vert {\bf X}_2) \ ,%/b
\end{align}
where $c^b_{{G_{X_1}}}(X_1)$ is a valid coloring of $G_{X_1(S)}$, and equivalently $c^f_{{G_{X_1}}}(X_1)$ is a valid $a:b$ coloring of $\Graph$. %and its distribution can be derived from that of $c_{{G_{X_1(S)}}}(X_1(S))$. 

The average entropy decreases monotonically in the size of the subset (Han \cite{te1978nonnegative}). As $b$ increases the rate of functional compression via fractional coloring decreases. 
%
%the minimum entropy of a coloring is called the chromatic entropy
The minimum entropy of a fractional coloring %, i.e., the {\FCE}, 
can be found by minimizing across all valid $a:b$ colorings of $\Graph$. Observing from (\ref{fractional_graph_coloring}) that
$\chi_f(\Graph)=\lim\limits_{b\to\infty}\chi_b(\Graph)/b$ and (\ref{entropy_subset_full_coloring_vs_fractional_coloring}), we obtain  (\ref{chromatic_fractional}).
\end{proof}

To visualize the rate given in Prop. \ref{FCE_characteristic_graph}, consider Example \ref{uniform_example}. %in Fig. \ref{fig:fractional_coloring}, 
The $5:2$ coloring distribution satisfies $P(c_1)=P(c_2)=2/5$ and $P(c_3)=1/5$ and $P(c_4)=P(c_5)=2/5$ ($c_3$ is repeated in $G_{X_{11}}$ and $G_{X_{12}}$). The distribution of $c_{{G_{X_1(S)}}}(X_1(S))$ across 2 graphs %satisfies $({1}/{5},{1}/{5},{1}/{5},{1}/{5},{1}/{5})$, yielding 
yields $H(c^f_{{G_{X_1}}})/2=1.16<H(\coloringx)\approx 1.52$. %\derya{A higher saving for b>2.}

\begin{comment}
For a valid $3$-fold coloring, we need $a=8$. %($7$ does not suffice because a color cannot be repeated more than $2$ times). 
The coloring distribution %for a $3$-tuple of graphs %set of graphs with size $b=3$ 
satisfies $P(c_k)=2/5$, $k\in\{1,2\}$ and $P(c_3)=1/5$ and $P(c_k)=2/5$, $k\in\{4,5\}$ ($c_3$ is repeated in $G_{X_{11}}$ and $G_{X_{12}}$), and  $P(c_k)=2/5$, $k\in\{6,7\}$, and $P(c_8)=1/5$ in $G_{X_{13}}$. The distribution of $c_{{G_{X_1(S)}}}(X_1(S))$ across 3 graphs is $({2}/{15},\dots,{2}/{15},{1}/{15})$, and $H(c^f_{{G_{X_1}}}(X_1))=0.99\times 3$.
\end{comment}

The following is an intuitive result due to a finer-grained granularity that the fractional graph coloring provides. %versus the traditional graph coloring.
Its proof follows from combining the definition in (\ref{chromatic_fractional}) and (\ref{chromatic}).
\begin{cor} The {\FCE} of a graph $G_{X_1}$ and {\CE} satisfy the following relation:
\begin{align}
\HGchifrac \leq \HGchi \ . 
\end{align}
\end{cor}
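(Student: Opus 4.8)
The plan is to recognize that traditional coloring is exactly the $b=1$ instance of fractional ($b$-fold) coloring, so that $\HGchi$ is one of the terms appearing in the infimum defining $\HGchifrac$ in (\ref{chromatic_fractional}); the inequality then follows because an infimum never exceeds any individual term.

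First I would verify this base-case identification. A valid $a:1$ coloring of $\Graph$ assigns a single color to every vertex with adjacent vertices receiving disjoint singletons, i.e., distinct colors, which is precisely the proper vertex coloring $\coloringx$ introduced in Sect.~\ref{section:traditional_coloring}. Hence at $b=1$ the family of valid fractional colorings coincides with the family of valid traditional colorings, and the corresponding entropy collections agree,
$$\mathcal{H}^{\chi_f}(\coloringf)\big|_{b=1} = \mathcal{H}^{\chi}(\coloring).$$

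Next I would restrict the infimum in (\ref{chromatic_fractional}) to the single index $b=1$. Since the infimum is bounded above by any one of its arguments, evaluating at $b=1$ together with the identification above gives
$$\HGchifrac \;\leq\; \frac{1}{1}\min_{\coloringxf}\mathcal{H}^{\chi_f}(\coloringf)\Big|_{b=1} \;=\; \min_{\coloringx}\mathcal{H}^{\chi}(\coloring) \;=\; \HGchi,$$
where the final equality is the definition (\ref{chromatic}) of the chromatic entropy, which completes the argument.

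The hard part is essentially nonexistent: the only step meriting care is the $b=1$ identification between fractional and traditional colorings, after which the bound is immediate from the monotone structure of the infimum. As a sanity check one may also appeal to the monotonicity from Han's theorem invoked in Prop.~\ref{FCE_characteristic_graph}, namely that averaging the per-symbol coloring entropy over larger subsets cannot increase it; but this stronger fact is not required here, since the claimed inequality already drops out by evaluating the infimum at its smallest admissible index.
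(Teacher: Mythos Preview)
Your argument is correct and matches the paper's own reasoning: the paper notes that the proof follows from combining the definitions in (\ref{chromatic_fractional}) and (\ref{chromatic}), and its (omitted) proof simply restricts the infimum in (\ref{chromatic_fractional}) to $b=1$ to recover $\HGchi$. Your identification of the $a:1$ colorings with traditional colorings and the ensuing inequality is exactly this.
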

\begin{comment}
\begin{proof}
Using the definition %of {\FCE} 
in (\ref{chromatic_fractional}), it is clear that
\begin{align}
\HGchifrac\leq \min_{\coloringx}\{H(\coloring):\, \coloring \nonumber\\ \mbox{ is a valid a:1 coloring of } \Graph \vert X_2\}=\HGchi \ , \nonumber
\end{align}
where the last step follows from (\ref{chromatic}).
\end{proof}
\end{comment}

%Using K\"orner's result in \cite{korner1973coding}, (\ref{chromatic_vs_characteristic}), the following relation holds between the {\FCE} and the {\FCGE}:
Exploiting \cite{korner1973coding} the {\FCGE} satisfies
\begin{align}
\label{chromatic_vs_characteristic_fractional}
\HGfrac=\lim\limits_{n\to \infty}\frac{1}{n}\HGchipowernfrac \ ,
\end{align}
where $\chi_f$ is the fractional chromatic number of $\nPowerGraph$. 

Using (\ref{chromatic_fractional}) and (\ref{chromatic_vs_characteristic_fractional}) we can derive the {\FCGE}, a natural generalization of (\ref{chromatic_vs_characteristic}). %the {\CGE}. 
Prop. \ref{FCE_characteristic_graph_n_limit} is derived from Prop. \ref{FCE_characteristic_graph} and (\ref{chromatic_vs_characteristic_fractional}), and we skip its proof.

\begin{prop}
\label{FCE_characteristic_graph_n_limit}
The {\FCGE} is given as
\begin{multline}
\label{FCGE}
\HGfrac=\lim\limits_{n\to \infty}\frac{1}{n} \inf\limits_{b} \frac{1}{b} \min\nolimits_{\coloringpowernxf}\{H(\coloringpowernf):\, \\ 
\coloringpowernf \mbox{ is a valid a:b coloring of } \nPowerGraph \vert\, {\bf X}_2^n\} \ ,   
\end{multline}
where $\coloringpowernf$ is a fractional coloring variable that assigns $b$ colors to each vertex of $\nPowerGraph$ out of $a\ge b$ available colors.
\end{prop}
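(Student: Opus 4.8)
The plan is to obtain (\ref{FCGE}) by applying the characterization of Prop.~\ref{FCE_characteristic_graph} not to $\Graph$ itself but to its $n$-th power $\nPowerGraph$, and then pushing the result through the limiting relation (\ref{chromatic_vs_characteristic_fractional}). The enabling observation is that $\nPowerGraph$ is again a characteristic graph: its vertex set is $\mathcal{X}_1^n$ and, by the coordinate-wise adjacency rule defining power graphs, its edges encode exactly the distinctions needed to recover $f$ for the vector source ${\bf X}_1^n$ against side information ${\bf X}_2^n$. Consequently Prop.~\ref{FCE_characteristic_graph} applies verbatim with $\Graph$ replaced by $\nPowerGraph$ and $X_2$ replaced by ${\bf X}_2^n$, so the whole argument reduces to a composition of two reductions already in hand.

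First I would instantiate Prop.~\ref{FCE_characteristic_graph} at the power-graph level, giving
\begin{align}
\HGchipowernfrac = \inf\limits_{b}\frac{1}{b}\min_{\coloringpowernxf}\{H(\coloringpowernf):\, \coloringpowernf \mbox{ is a valid a:b coloring of } \nPowerGraph\vert\,{\bf X}_2^n\}\ . \nonumber
\end{align}
This step inherits, verbatim from the proof of Prop.~\ref{FCE_characteristic_graph}, the two facts on which it rests: the identity (\ref{entropy_subset_full_coloring_vs_fractional_coloring}) equating the joint coloring of a $b$-tuple of i.i.d. graph replicas with a single valid $a:b$ fractional coloring, and Han's theorem \cite{te1978nonnegative}, which makes the per-symbol average entropy monotonically non-increasing in $b$ so that the $\inf_b$ is well defined. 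I would then substitute this expression into (\ref{chromatic_vs_characteristic_fractional}), carry the factor $\frac{1}{n}$ inside, and read off (\ref{FCGE}) directly. The relation (\ref{chromatic_vs_characteristic_fractional}) is itself the fractional analogue of K\"orner's identity (\ref{chromatic_vs_characteristic}); it follows by the standard sub-additivity argument, since the minimum-entropy fractional coloring of $\nPowerGraph$ is sub-additive in $n$, so the sub-additivity lemma \cite{scheinerman2011fractional} supplies both the existence of the limit and its equality to the infimum over $n$.

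The main obstacle is the legitimacy of nesting the outer limit $\lim_{n\to\infty}\frac{1}{n}$ with the inner infimum $\inf_b\frac{1}{b}$, i.e., ensuring that blocking over the source length $n$ and relaxing over the fold number $b$ compose without interference. I would resolve this exactly through the two monotonicity/convergence properties above: for each fixed $n$, Han's monotonicity makes the inner object a genuine infimum over $b$, while sub-additivity in $n$ makes the outer object a genuine limit; since both quantities are monotone (hence convergent) and uniformly bounded below by $0$, the composition is unambiguous and no exchange-of-limits pathology can arise. The only remaining point requiring care is confirming that the fractional K\"orner relation (\ref{chromatic_vs_characteristic_fractional}) indeed holds in the first place, which is precisely why the argument leans on \cite{korner1973coding} together with the fractional-coloring sub-additivity of \cite{scheinerman2011fractional} rather than re-deriving the limit from scratch.
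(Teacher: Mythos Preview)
Your proposal is correct and follows exactly the route the paper indicates: the paper states explicitly that Prop.~\ref{FCE_characteristic_graph_n_limit} ``is derived from Prop.~\ref{FCE_characteristic_graph} and (\ref{chromatic_vs_characteristic_fractional}), and we skip its proof.'' Your write-up instantiates Prop.~\ref{FCE_characteristic_graph} at the $n$-th power graph and then substitutes into (\ref{chromatic_vs_characteristic_fractional}), which is precisely that derivation, with the added (and welcome) care about the well-posedness of the nested $\lim_{n}$ and $\inf_b$.
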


\begin{lem}\label{FCGEvsCGE}
%\ab{I'd say this is a lemma. Wouldn't the RHS be constrained to b=1 and the LHS does an inf over $b \in \mathbb{N}$ in which case this would be obvious?}
The following relation holds for the {\FCGE} and {\CGE}:
\begin{align}
\HGfrac \leq \HG\ . %why not \leq ????
\end{align}
\end{lem}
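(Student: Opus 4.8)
The plan is to recognize that the fractional entropy in Prop.~\ref{FCE_characteristic_graph_n_limit} is obtained by optimizing over a family of colorings that already contains every traditional coloring as the special case $b=1$. A valid $a{:}1$ coloring assigns a single color to each vertex with adjacent vertices receiving distinct colors, which is exactly a traditional valid coloring; moreover, at $b=1$ the normalizing factor $\frac{1}{b}$ equals $1$. Hence, for each fixed $n$, the $b=1$ term of the inner infimum in (\ref{FCGE}) reproduces precisely the traditional objective of (\ref{chromatic_vs_characteristic}), and since an infimum over $b$ cannot exceed the value at $b=1$, the fractional quantity is dominated by the traditional one at every block length.

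Concretely, I would first fix $n$ and bound the inner optimization of (\ref{FCGE}) by restricting to $b=1$, giving
\begin{align}
\inf_b \frac{1}{b}\min_{\coloringpowernxf} H(\coloringpowernf\vert {\bf X}_2^n) \leq \min_{\coloringpowernx} H(\coloringpowern\vert {\bf X}_2^n)\ .
\end{align}
I would then divide by $n$ and pass to the limit $n\to\infty$. By (\ref{chromatic_vs_characteristic}) the right-hand side converges to $\HG$, while by Prop.~\ref{FCE_characteristic_graph_n_limit} together with (\ref{chromatic_vs_characteristic_fractional}) the left-hand side converges to $\HGfrac$. Because the inequality holds for every $n$ and both limits are already known to exist (established earlier in the excerpt), the ordering is preserved in the limit, yielding $\HGfrac \leq \HG$. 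This is the asymptotic, power-graph analogue of the base-graph Corollary $\HGchifrac \le \HGchi$, obtained by applying that same monotonicity argument to $\nPowerGraph$ before taking the normalized limit.

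The step requiring the most care is the exact identification of the $b=1$ fractional-coloring family with the traditional-coloring family, and the verification that the associated conditional entropies agree term by term rather than merely up to the optimization. Once this correspondence is pinned down from Defn.~\ref{fractional_chromatic_number}, the remainder is a routine ``infimum is at most the $b=1$ value'' observation followed by limit-taking, so no genuine obstacle remains beyond bookkeeping with the conditioning on ${\bf X}_2^n$.
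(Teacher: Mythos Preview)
Your argument is correct and is in fact more direct than the paper's own proof. You exploit that the family of valid $a{:}b$ colorings contains the traditional colorings as the $b=1$ slice, so the infimum over $b$ in Prop.~\ref{FCE_characteristic_graph_n_limit} is automatically bounded above by the traditional objective; taking $\frac{1}{n}$ and passing to the limit finishes the job. The only care point you flag---that an $a{:}1$ coloring is literally a traditional coloring with the same conditional entropy---is immediate from Defn.~\ref{fractional_chromatic_number}, so the proof is complete as written.

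The paper takes a genuinely different route. Rather than comparing the entropy optimizations directly, it invokes structural properties of the fractional chromatic number from \cite{scheinerman2011fractional}: the identity $\chi_f(G)=\lim_{n\to\infty}\sqrt[n]{\chi(G^n)}$ and the multiplicativity $\chi_f(G^n)=\chi_f(G)^n$. From these it deduces $\chi_f(G^m)\le \lim_{n\to\infty}\chi(G^n)^{m/n}$ and then argues that the fractional scheme spends $\log\chi_f(G^n)$ bits versus $\log\chi(G^n)$ for the traditional scheme. This approach buys insight into \emph{why} the gap exists---it ties the entropy comparison to the known gap between $\chi_f$ and $\chi$ on power graphs---but it is less self-contained: the final step implicitly identifies the minimum-entropy coloring cost with $\log$ of the relevant chromatic number, which is exact only for uniform color distributions and otherwise requires an additional argument. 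Your inclusion argument avoids that subtlety entirely and needs no external graph-theoretic machinery.
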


\begin{proof}
We use an important result that ties the {\FCN} to the n-th power of a graph.

The following relation between the n-th power of $G$ and the {\FCN} holds \cite[Corollary 3.4.3]{scheinerman2011fractional}:
\begin{align}
\label{fractional_vs_regular_chromatic number}
\chi_f(G)=\inf\limits_{n} \sqrt[n]{\chi(G^n)}=\lim\limits_{n\to\infty} \sqrt[n]{\chi(G^n)}\ .    
\end{align}
The relation (\ref{fractional_vs_regular_chromatic number}) implies that $\chi(G^n) \approx \chi_f(G)^n$ as $n\to\infty$.

It also holds from \cite[Corollary 3.4.2]{scheinerman2011fractional} that 
\begin{align}
\label{fractional_product_property}
\chi_f (G^n)=\chi_f(G)^n \ .    
\end{align}

%For the proofs of (\ref{fractional_vs_regular_chromatic number}) and (\ref{fractional_product_property}), the reader is referred to Theorems 1.6.1 and 1.6.2 in \cite{scheinerman2011fractional}.

As a result, we infer for the m-th power of $G$ that $\chi_f(G^m)\overset{(\ref{fractional_product_property})}{=} \chi_f(G)^m \overset{(\ref{fractional_vs_regular_chromatic number})}{=} (\lim\limits_{n\to\infty} \sqrt[n]{\chi(G^n)})^m$, and
$\chi_f(G^m)\overset{(\ref{fractional_vs_regular_chromatic number})}{=}\lim\limits_{n\to\infty} \sqrt[n]{\chi(G^{n\cdot m})}%\nonumber\\
%&
\leq %\lim\limits_{n\to\infty} \sqrt[n]{\chi(G^{n})^m}=
\lim\limits_{n\to\infty} \chi(G^{n})^{\frac{m}{n}}$. 
The fractional coloring requires $\log \chi_f(G^n)%=n\log \chi_f(G)
$ bits which is less than $\log \chi(G^n)$ bits as required by the traditional coloring. 
\end{proof}

The current paper aims to improve the compression rate by introducing fractional chromatic entropy. %We emphasize that our definition does not conflict with the independent set-based definition 
On the other hand, this approach does not outperform the independent set-based fundamental limit for graph entropy that establishes the optimal rate for lossless function computation $f(X_1,X_2)$ given side information $X_2$ \cite[Theorem 21.2]{el2011network}. %which is given by the graph entropy in \cite[Theorem 21.2]{el2011network}. 

%%%%%%%%%%%%%%%%%%%%%%%%%%%%%%%%%%%%%%%
\section{Coding Gains of Fractional Coloring}
\label{section:coding_gains}

We denote the integrality gap (IG), i.e., ratio of the solutions of the traditional coloring versus the fractional coloring problems for encoding $\nPowerGraph$, by $IG_n$. It is given as
%Approximation and integrality gap %https://en.wikipedia.org/wiki/Linear_programming_relaxation
\begin{align}
IG_n = \frac{\HGchipowern}{\HGchipowernfrac} \ ,\quad n\geq 1\ .\nonumber
%\frac{\frac{1}{n}\HGchipowern}{\frac{1}{n}\HGchipowernfrac}
\end{align}
From Lemma \ref{FCGEvsCGE}, it is immediate that $IG_n \geq 1$ for $n\geq 1$.

\begin{lem}\label{IG_as_function_n}
%\ab{I'd consider this a lemma}\derya{not sure...}
$IG_n$ is an increasing function of $n$.
\end{lem}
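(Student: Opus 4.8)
The plan is to prove the equivalent discrete statement that $IG_{k}\ge IG_{k-1}$ for every $k\ge 2$, since monotonicity of the integer sequence $\{IG_n\}$ is exactly the content of ``increasing function of $n$''. Writing $u_k=\HGchipowerk$ and $v_k=\HGchipowerkfrac$, both strictly positive, and cross-multiplying, $IG_k\ge IG_{k-1}$ is equivalent to $u_k/u_{k-1}\ge v_k/v_{k-1}$, i.e.\ to the statement that the per-symbol \emph{fractional} rate $v_k/k$ decreases at least as fast (relative to its own level) as the per-symbol \emph{traditional} rate $u_k/k$. The baseline $IG_n\ge 1$ is already in hand from Lemma \ref{FCGEvsCGE}, and the base case $IG_1=1$ reflects that fractional coloring only activates for $n\ge 2$ (a single transmission cannot carry two colors), so the entire burden of the lemma is this comparison of decay rates.

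I would then locate the two limits and their monotone approach. By K\"orner's identity (\ref{chromatic_vs_characteristic}) the normalized traditional entropy satisfies $u_k/k\to\HG$, and by (\ref{chromatic_vs_characteristic_fractional}) the normalized fractional entropy satisfies $v_k/k\to\HGfrac$; by subadditivity of the product coloring each sequence approaches its limit from above. The key structural asymmetry that should force the fractional rate down faster is multiplicativity: from (\ref{fractional_product_property}) the power graph obeys $\chi_f(\nPowerGraph)=\chi_f(\Graph)^k$ exactly, so fractional coloring reaps the \emph{full} benefit of the power structure at every finite $k$, whereas by (\ref{fractional_vs_regular_chromatic number}) the traditional count only reaches $\chi(\nPowerGraph)^{1/k}\to\chi_f(\Graph)$ asymptotically, with $\chi(\nPowerGraph)>\chi_f(\Graph)^k$ at finite $k$. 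My intention is to convert this gap into a strict per-step drop of the fractional chromatic entropy that the integral coloring cannot match, carrying the argument by induction on $k$ with the ``rounding overhead'' of integral coloring as the monotone quantity.

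The hardest step, and the one I expect to be the main obstacle, is translating the chromatic-\emph{number} asymmetry into a statement about chromatic \emph{entropies}. A naive Fekete-type bound only certifies $u_k/k\ge\HG$ and $v_k/k\ge\HGfrac$ and, taken alone, points toward both rates merely converging, which does not by itself order their speeds; the subadditive structure is genuinely insufficient and can even suggest the wrong direction, so the gain must be extracted from the \emph{strict} suboptimality of integer coloring on powers rather than from subadditivity. Concretely, I would try to show that the marginal coloring added in passing from $G^{k-1}_{\mathbf{X}_1}$ to $\nPowerGraph$ admits a fractional assignment whose conditional entropy is strictly below its integral counterpart by an amount bounded away from zero uniformly in $k$, invoking Han's inequality exactly as in the proof of Prop.\ \ref{FCE_characteristic_graph} applied to the $k$-fold coloring variables, together with the exact product identity of Prop.\ \ref{FCE_characteristic_graph_n_limit}. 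Establishing this uniform, non-vanishing per-step improvement—so that the savings accumulate with $k$ instead of washing out—is precisely where the delicate work lies, since chromatic entropy optimizes over skewed color distributions and is not a simple function of either $\chi(\nPowerGraph)$ or $\chi_f(\nPowerGraph)$.
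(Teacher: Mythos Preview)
Your proposal contains a factual slip and leaves the decisive step open.

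First, the claim $IG_1=1$ is wrong: the paper's own Example~\ref{uniform_example} gives $\HGchi\approx 1.52$ and $\HGchifrac\approx 1.16$, hence $IG_1\approx 1.31$. The remark that ``fractional coloring only activates for $n\ge 2$'' applies to the explicit \emph{code construction} of Sect.~\ref{section:codes}, not to the definition of $\HGchifrac$ in (\ref{chromatic_fractional}), which is a $b$-fold assignment on a \emph{single} copy of $\Graph$ and is already strictly below $\HGchi$ for $n=1$ whenever $\chi_f(\Graph)<\chi(\Graph)$.

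Second, and more importantly, your argument never closes. You correctly reduce $IG_k\ge IG_{k-1}$ to $u_k/u_{k-1}\ge v_k/v_{k-1}$, but then you only say you ``would try to show'' a uniform per-step improvement and concede that converting the chromatic-\emph{number} identity (\ref{fractional_product_property}) into a chromatic-\emph{entropy} comparison is ``precisely where the delicate work lies''. That is the whole lemma; nothing you write establishes it. Subadditivity and (\ref{fractional_vs_regular_chromatic number}) give only $\chi(\nPowerGraph)\ge\chi_f(\Graph)^n$, which controls color \emph{counts}, not the minimum-entropy colorings that define $u_k,v_k$, so the inference you want does not follow from the ingredients you list.

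For comparison, the paper does \emph{not} work with the ratios $u_k/u_{k-1}$ at all. It telescopes $u_n=\sum_{k=1}^n a_k$ and $v_n=\sum_{k=1}^n b_k$ with $a_k=\HGchipowerk-\HGchipowerkminusone$ and $b_k=\HGchipowerkfrac-\HGchipowerkminusonefrac$, argues via Han's theorem that both increment sequences are nonincreasing with $a_k\ge b_k$ and that $b_k$ falls off faster, concludes $a_1/b_1\le a_2/b_2\le\cdots\le a_n/b_n$, and then invokes the mediant inequality
\[
\frac{a_1}{b_1}\le \frac{a_1+a_2}{b_1+b_2}\le\cdots\le \frac{\sum_{k\le n}a_k}{\sum_{k\le n}b_k}=IG_n.
\]
Thus the paper's mechanism is an \emph{increment}-ratio monotonicity fed into a mediant chain, rather than your \emph{level}-ratio comparison; if you want to salvage your route you must actually prove $u_k v_{k-1}\ge u_{k-1}v_k$ for every $k$, which your outline does not do.
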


\begin{proof}

We can rewrite $IG_n$ as $IG_n={\frac{1}{n}\sum\limits_{k=1}^n a_k}\,\Big/\,{\frac{1}{n}\sum\limits_{k=1}^n b_k}$, where $a_k=\HGchipowerk-\HGchipowerkminusone$ and $b_k=\HGchipowerkfrac-\HGchipowerkminusonefrac$ which both decrease in $k$ because the sequences $\HGchipowerk$ and $\HGchipowerkfrac$ are increasing and concave in the sense of decreasing slope, which can be shown using Han's theorem \cite{te1978nonnegative}. Furthermore, $a_k\geq b_k$ for $k\in\{1,2,\dots,n\}$. Due to fractional coloring $b_k$ decreases with a higher rate versus $a_k$. We infer that $1\leq \frac{a_1}{b_1}\leq \frac{a_2}{b_2}\leq \dots \frac{a_n}{b_n}$. Using this relation, we show that 
\begin{align}
\frac{a_1}{b_1}\leq \frac{a_1+a_2}{b_1+b_2}\leq \dots \leq \frac{a_1+a_2+\dots+a_n}{b_1+b_2+\dots+b_n}=IG_n\ .\nonumber
\end{align}
Hence, $IG_n$ is an increasing function of $n$.
\end{proof}

%our running example 
In Example \ref{uniform_example}, $IG_1=\frac{1.52}{1.16}=1.31$, $IG_2=\frac{1.44}{0.92}=1.57$, and $IG_n$ is higher for $n>2$ (Lemma \ref{IG_as_function_n}). %We expect a higher IG for $n>2$. 
The gain is %mainly 
due to cross-coding across graphs (Prop. \ref{FCE_characteristic_graph}). For the identity function, the $\Graph$ %characteristic graph 
is complete, $b^*=1$, and fractional coloring does not have savings. Significant gains are possible for sparse graphs. %(graphs with only a few edges)

For a valid a:b coloring of $\Graph$, let $b^*_{\Graph}$ be the smallest $b=|S|$ that achieves (\ref{chromatic_fractional}) for $\Graph$, and $\coloring$ and $\coloringf$ be the valid colorings with distributions ${\bf q}=(q_1,\dots,q_{\chi(\Graph)})$ and ${\bf r}=(r_1,\dots,r_{\chi_{b^*_{\Graph}}(G_{X_1(S)})})$ that minimize the respective entropies of the colorings. 
Similarly, for $n>1$, let $b^*_{\nPowerGraph}$ be the smallest $b=|S|$ for $\nPowerGraph$ satisfying (\ref{fractional_graph_coloring}),  $\coloringpowern$ and $\coloringpowernf$ be the valid colorings with %distributions 
${\bf q}^n=(q_1,\dots,q_{\chi(\nPowerGraph)})$, ${\bf r}^n=(r_1,\dots,r_{\chi_{b^*_{\nPowerGraph}}(G^n_{{\bf X}_1(S)})})$. %that minimize the respective entropies. 
%We next lower bound $IG_n$. 

\begin{prop}\label{prop_IG_lower_bound}
The fractional coloring scheme attains
\begin{align}
%\label{IG_lower_bound}
IG_n\geq 
\frac{b^*_{\nPowerGraph}H({\bf q}^n)}{H({\bf q}^n)+\Delta_{G^n}} \ ,\nonumber
\end{align}
where
\begin{align}
\Delta_{G^n}\!=\!\!\!\sum\limits_{j\in \mathcal{J}_{G^n}}\!\!\!q_j \Big[h\Big(\frac{1}{b^*_{\nPowerGraph}}\Big)\!+\!\frac{b^*_{\nPowerGraph}\!\!-\!1}{b^*_{\nPowerGraph}}\!\log(m_{G^n}(j)(b^*_{\nPowerGraph}\!-\!1))\Big] \ \!\!,  \nonumber 
\end{align}
$j\in \mathcal{J}_{G^n}=\{1,\dots,\chi_b(G^n_{{\bf X}_1(S)})\}$ represents the coloring class, %for traditional coloring,
and $m_{G^n}(j)$ is the count of class $j$ vertices. 
\end{prop}

\begin{proof}
We first focus on $n=1$. Then,
\begin{align}
IG_1=b^*_{\Graph}\frac{H(\coloring\vert X_2)}{H(\coloringf\vert X_2)}
=b^*_{\Graph}\frac{H({\bf q})}{H({\bf r})}\ .\nonumber
\end{align}
Using the grouping property of entropy \cite[Ch.2 ]{cover2012elements}, 
\begin{align}
\label{grouping}
H({\bf r})=H({\bf q})+\sum_{j}q_j H\Big((\frac{r_l}{q_j}:\, \sum\nolimits_{l\in j} r_l=q_j)\Big) \ .  
\end{align}
We next analyze the RHS of (\ref{grouping}) for a given $j$, assuming that there exist $m_{G}(j)$ vertices in $\Graph$ in color class $j$. It then holds that in the $b$-fold coloring scheme $q_j$ accumulates the probabilities of $m_{G}(j)\times b$ vertices in $G_{X_1(S)}$. The marginal distribution of colors in each $G_{X_{1i}}:\, i\in S$ is identical. Hence,
\begin{multline}
H\Big((\frac{r_l}{q_j}:\, \sum\limits_{l\in j} r_l=q_j)\Big) \leq H\Big(\frac{1}{m_{G}(j) b},\dots, \frac{1}{m_{G}(j) b},\frac{1}{b}\Big)   \nonumber\\
=h\Big(\frac{1}{b}\Big)+\frac{b-1}{b}\log(m_{G}(j)(b-1))\ ,\nonumber
\end{multline}
where the colors of $G_{X_{11}}$ and $\Graph$ are the same, putting $1/b$ of the mass of $q_i$ in $G_{X_1(S)}$, and leaving $(b-1)/b$ of the mass to the remaining $b-1$ graph replicas $\{G_{X_{1i}}:\, i\in S,\, i\neq 1\}$ with $m_{G}(j)(b-1)$ vertices. The RHS holds when the colors are uniformly split among $m_{G}(j)(b-1)$ vertices. Hence,
\begin{align}
%\label{IG_lower_bound}
IG_1\!\geq \!\frac{b^*_{\Graph}H({\bf q})}{H({\bf q})+\Delta_G} \ ,\nonumber
\end{align}
where $\Delta_G=\!\!\sum\limits_{j\in\mathcal{J}_G}\!\!\!q_j \Big[h\Big(\frac{1}{b^*_{\Graph}}\Big)\!\!+\!\frac{b^*_{\Graph}\!\!-\!1}{b^*_{\Graph}}\!\log(m_{G}(j)(b^*_{\Graph}\!\!\!-\!1))\Big]$. 
For $n>1$, employing the grouping property we can obtain the final result. We note that the count of class $j$ vertices $m_{G^n}(j)$ is less than $m_{G}(j)^n$ for $j\in \mathcal{J}_G=\{1,\dots,\chi(G_{X_1})\}$.
%\begin{align}
%IG_n =b^*_{\nPowerGraph}\frac{H(\coloringpowern\,\vert\,{\bf X}_2)}{H(\coloringpowernf\,\vert\,{\bf X}_2)}=b^*_{\nPowerGraph}\frac{H({\bf q}^n)}{H({\bf r}^n)}\ .
%\end{align}
\end{proof}

\begin{comment}
To guarantee further gains in $IG_n$, we can check the conditions for achieving $H({\bf q})\geq \sum_{j}q_j H\Big((\frac{r_l}{q_j}:\, \sum\nolimits_{l\in j} r_l=q_j)\Big)$. A sufficient condition is when $\log \frac{1}{q_j} \geq H\Big((\frac{r_l}{q_j}:\, \sum\nolimits_{l\in j} r_l=q_j)\Big)$ which always holds when the set $\{r_l:\, \sum\nolimits_{l\in j} r_l=q_j\}$ has a size less than $b/q_j\leq b\chi_f=\chi_b$ (always true for an independent set as $1/q_j\leq \chi_f$). This condition is satisfied when $|\{r_l:\, \sum\nolimits_{l\in j} r_l=q_j\}|<b/q_j$ for each $j$.\textcolor{red}{$\sum_j 1/q_j> \sum_j m_{G}(j)=|V|$ and $\sum_j |\{r_l:\, \sum\nolimits_{l\in j} r_l=q_j\}|=|V|b$.}

Exploiting $\chi_f\geq \frac{n}{\alpha}$ where $\alpha$ is the independence number, %A maximum independent set is an independent set of largest possible size for a given graph G. This size is called the independence number of G and is usually denoted by $\alpha$
we can note that  $\chi_f\geq \frac{|V|}{m_{G}(j)}$ for the largest $m_{G}(j)$. For this $m_{G}(j)$, $1/q_j>m_{G}(j)$ and $\chi_f\geq\frac{|V|}{m_{G}(j)}>nq_j$.

Note that $1/q_j>m_{G}(j)$:
\begin{align}
&h\Big(\frac{1}{b}\Big)+\frac{b-1}{b}\log(m_{G}(j)(b-1)) \nonumber\\
&=h\Big(\frac{1}{b}\Big)+\frac{b-1}{b}\log(b-1)+\frac{b-1}{b}\log m_{G}(j) \nonumber\\
&= -\frac{1}{b}\log\Big(\frac{1}{b}\Big)-\frac{b-1}{b}\log\Big(\frac{b-1}{b}\Big)\nonumber\\
&+\frac{b-1}{b}\log(b-1)+\frac{b-1}{b}\log m_{G}(j) \nonumber\\
&<\log b+\frac{b-1}{b}\log \frac{1}{q_j} \nonumber\\
&<\log \frac{1}{q_j}+\frac{b-1}{b}\log \frac{1}{q_j}
\end{align}
\end{comment}

From Prop. \ref{prop_IG_lower_bound}, for any given $\Graph$, if not a complete graph, fractional coloring for $n>1$ offers compression savings.

The following corollary implies theoretically that the achievable $IG_n$ is lower bounded by $b^*_{\nPowerGraph}$ (up to a scaling), for a valid $b^*_{\nPowerGraph}$-fold coloring that is determined by the function to be computed and its n-th power graph $\nPowerGraph$. 
\begin{cor}\label{cor_IG_lower_bound}
Under the assumption that the colorings of $\nPowerGraph$ are uniform, the fractional coloring scheme attains
\begin{align}
IG_n\geq b^*_{\nPowerGraph}\cdot {\log\chi_f(\Graph)}\,\Big/\,{\log\chi_{b^*_{\Graph}}(\Graph)} \ .\nonumber
\end{align}
\end{cor}
\begin{proof}
Let $b^*_{\Graph}$ and $b^*_{\nPowerGraph}$  be the smallest $b$ values such that (\ref{fractional_graph_coloring}) holds for $\Graph$ and $\nPowerGraph$, respectively. Then, provided that the colorings of $\nPowerGraph$ are uniform, we can simplify the expressions for $H({\bf q})$ and $H({\bf r})$ and obtain
\begin{align}
\label{IGn_LB1}
IG_n 
&\geq 
\frac{ \frac{1}{n}\log\chi_f(\Graph)^n}{\frac{1}{n}\cdot\frac{1}{b^*_{\nPowerGraph}}\cdot\log (b^*_{\nPowerGraph}\cdot\chi_f(\nPowerGraph))} \\
%&\overset{(\ref{fractional_product_property})}{=}\frac{n\log\chi_f(\Graph)}{\frac{1}{b^*_{\nPowerGraph}}\cdot\log ({b^*_{\nPowerGraph}\cdot\chi_f(\Graph)^n)}}\nonumber\\
\label{IGn_LB2}
&=b^*_{\nPowerGraph}\cdot\frac{\log\chi_f(\Graph)}{\log {(b^*_{\nPowerGraph})^{1/n}}+\log\chi_f(\Graph)}\\
\label{IGn_LB3}
&\geq b^*_{\nPowerGraph}\cdot {\log\chi_f(\Graph)}\,\Big/\,{\log\chi_{b^*_{\Graph}}(\Graph)}\ ,\quad\,%\qedhere
\end{align}
where the lower bound (\ref{IGn_LB1}) follows from using  %(\ref{fractional_graph_coloring}) and 
(\ref{fractional_vs_regular_chromatic number}), and (\ref{IGn_LB2}) is due to (\ref{fractional_product_property}). 
Employing the relations $\chi_f(\Graph)=\frac{\chi_{b}(\Graph)}{b}$,  (\ref{fractional_product_property}), which yields $\chi_f(\nPowerGraph)=\frac{\chi^n_{b}(\Graph)}{b^n}$, and $\chi^n_{b}(\Graph)\geq \chi_{b}(\nPowerGraph)$, we obtain $b^*_{\nPowerGraph}\leq (b^*_{\Graph})^n$, %. This gives the last inequality 
which yields (\ref{IGn_LB3}).
\end{proof}

%%%%%%%%%%%%%%%%%%%%%%%%%%%%%%%%%%%%%%%
{\bf \em Discussion and future directions.} Fractional coloring exploits the possibility of cross-coding between graphs and provides coding gains (at infinite and finite source sequence lengths). This approach provides a reduced communication complexity versus traditional coloring via decreasing the number of bits to send roughly from $\log \chi$ to $\log \chi_f$. Lower bounding $IG_n$ for different classes of functions and source distributions is of primary importance. Quantifying and upper bounding the IG in the limit of large $n$, i.e., the ratio of the bits required by fractional compression, $\HGfrac$, to $\HG$, is left as future work.

%There are several implementation bottlenecks. First, 
While fractional coloring is a less combinatorial problem than traditional coloring and accepts a linear programming solution (solvable in polynomial time), finding an independent set is strongly NP-hard \cite{garey1978strong}. Hence, the relaxation is in the class of NP-hard problems. %\cite[Ch. 3.9, 4.5]{scheinerman2011fractional} %3.9 Computational complexity %Ironically, this negative result follows from the positive result that the ellipsoid algorithm for solving linear programs does run in polynomial time. It turns out that the ellipsoid algorithm gives a polynomial transformation between the fractional coloring problem and the problem of computing the independence number of a graph. The latter invariant is known to be NP-hard to compute.
This issue can be alleviated via using {\em fractional edge coloring} (versus {\em fractional vertex coloring}) for which a polynomial-time solution exists \cite{scheinerman2011fractional}. %[Theorem 4.2.1]
%\cite{scheinerman2011fractional}Edge coloring : The minimum required number of colors for the edges of a given graph is called the chromatic index (or edge chromatic number, $\chi′(G)$.) of the graph.
%The computation of the chromatic number and edge chromatic number are NP-hard [117]. As noted in the previous chapter, it is also NP-hard to compute the fractional chromatic number of a graph. Thus, at first glance, one might expect that the fractional edge chromatic number is just as intractable. However, Theorem 4.2.1 on page 59 expressed the fractional edge chromatic number (a minimization problem) in terms of $\Delta(G)$ and $\Lambda(G)$ (which are maximization problems). A consequence of this minimax theorem is that the problem ``Given a graph G and integers a, b > 0, decide if $\chi'f (G) \leq a/b$" is in NP and co-NP. This suggests that a polynomial-time solution ought to exist, and indeed one does.
\begin{comment}
Next, fractional coloring in the limit of large $b$ may not be feasible %due to a finer quantization (graph-quantization) requirement
in practice, and {\em rounding or binning schemes} may be used post-fractional coloring for retaining the typical sequences in a sufficiently large power graph. %fractional coloring followed by rounding.
\end{comment}
The characteristic graph approach is concerned with functional compression of source sequences when the adjacency matrix is a (0,1)-matrix. A possible generalization includes edge-weighted graphs %where the adjacency matrix stores edge weights 
to capture the distortion in reconstruction.%for the computation tasks.

%\section*{Acknowledgment}

%We are indebted to ... for helpful discussions and the feedback on the paper. 

%%%%%%
%\IEEEtriggeratref{3}
\Urlmuskip=0mu plus 1mu\relax
\bibliographystyle{IEEEtran}
\bibliography{ref}

\newpage

\end{document}